\theoremstyle{definition}
\newtheorem{case}{Case}
\newtheorem{subcase}{Case}
\numberwithin{subcase}{case}
\title{A Unification Algorithm for GP 2 (Long Version)}
\author{Ivaylo Hristakiev\sponsor{This author's work is supported by a PhD scholarship of the UK Engineering and Physical Sciences Research Council (EPSRC)} and Detlef Plump}
\institute{The University of York, UK}
\abstract{
The graph programming language GP 2 allows to apply sets of rule schemata (or ``attributed'' rules) non-deterministically. To analyse conflicts of programs statically, graphs labelled with expressions are overlayed to construct critical pairs of rule applications. Each overlay induces a system of equations whose solutions represent different conflicts. We present a rule-based unification algorithm for GP expressions that is terminating, sound and complete. For every input equation, the algorithm generates a finite set of substitutions. Soundness means that each of these substitutions solves the input equation. 
Since GP labels are lists constructed by concatenation, unification modulo associativity and unit law is required. This problem, which is also known as \emph{word unification}, is infinitary in general but becomes finitary due to GP's rule schema syntax and the assumption that rule schemata are left-linear. Our unification algorithm is complete in that every solution of an input equation is an instance of some substitution in the generated set. 
}
\keywords{graph programs, word unification, critical pair analysis}
\newtheorem{assumption}{Assumption}
\newcounter{enum}
{\begin{list}{(\arabic{enum})}%
{\usecounter{enum}\topsep\itemsep \parskip.5ex \partopsep0ex}}%
{\end{list}}
\qed\end{trivlist}}
\qed\end{trivlist}}
\qed\end{trivlist}}
\newcommand{\red}[1]{\textcolor{red}{#1}}
\newcommand{\blue}[1]{\textcolor{blue}{#1}}
\newcommand{\ttt}{\texttt}
\newcommand{\mylabelC}[2] { \left\{
 \begin{array}{l} #1 \\ \hline #2 \end{array} \right\} 
}
\newcommand{\meq}{=}
\newcommand{\meqq}{=^{?}}
\newcommand{\mtt}{\mathtt}
\newcommand{\mrm}{\mathrm}
\newcommand{\G}{\mathcal{G}}
\newcommand{\tuple}[1]{\langle#1\rangle}
\newcommand{\dom}{\mathrm{Dom}}
\newcommand{\vran}{\mathrm{VRan}}
\newcommand{\dder}{\Rightarrow}
\newcommand{\ldder}{\Leftarrow}
\newcommand{\grey}{\graphnodecolour{.75}}
\begin{document}
\maketitle
\section{Introduction}
\label{sec:intro}

A common programming pattern in the graph programming language GP 2 \cite{Plump12a, Bak-Faulkner-Plump-Runciman15a} is to apply a set of graph transformation rules as long as possible. To execute such a loop $\{r_1,\dots,r_n\}!$ on a host graph, in each iteration an applicable rule $r_i$ is selected and applied. As rule selection and rule matching are non-deterministic, different graphs may result from the loop. Thus, if the programmer wants the loop to implement a function, a static analysis that establishes or refutes functional behaviour would be desirable.

The above loop is guaranteed to produce a unique result if the rule set $\{r_1,\dots,r_n\}$ is terminating and confluent. However, conventional confluence analysis via critical pairs \cite{Plump05a} assumes rules with constant labels whereas GP employs rule schemata (or ``attributed'' rules) whose graphs are labelled with expressions. Confluence of attributed graph transformation rules has been considered in \cite{Heckel-Kuester-Taentzer02a,Ehrig-Ehrig-Prange-Taentzer06a,Golas-Lambers-Ehrig-Orejas12a}, but we are not aware of \emph{algorithms}\/ that check confluence over non-trivial attribute algebras such as GP's which includes list concatenation and Peano arithmetic. The problem is that one cannot use syntactic unification (as in logic programming) when constructing critical pairs and checking their joinability, but has to take into account all equations valid in the attribute algebra.

For example, \cite{Heckel-Kuester-Taentzer02a} presents a method of constructing critical pairs in the case where the equational theory of the attribute algebra is represented by a confluent and terminating term rewriting system. The algorithm first computes normal forms of the attributes of overlayed nodes and subsequently constructs the most general (syntactic) unifier of the normal forms. This has been shown to be incomplete \cite[p.198]{Ehrig-Ehrig-Prange-Taentzer06a} in that the constructed set of critical pairs need not represent all possible conflicts. For, the most general unifier produces identical attributes---but it is necessary to find all substitutions that make attributes equivalent in the algebra's theory.

Graphs in GP rule schemata are labelled with lists of integer and string expressions, where lists are constructed by concatenation. In host graphs, list entries must be constant values. Integers and strings are subtypes of lists in that they represent lists of length one. 

As a simple example, consider the program in Figure~\ref{fig:distances} for calculating shortest distances. The program expects input graphs with non-negative integers as edge labels, and arbitrary lists as node labels. There must be a unique marked node (drawn shaded) whose shortest distance to each reachable node has to be calculated.
\begin{figure}[hbt]
\label{fig:distances}
\input{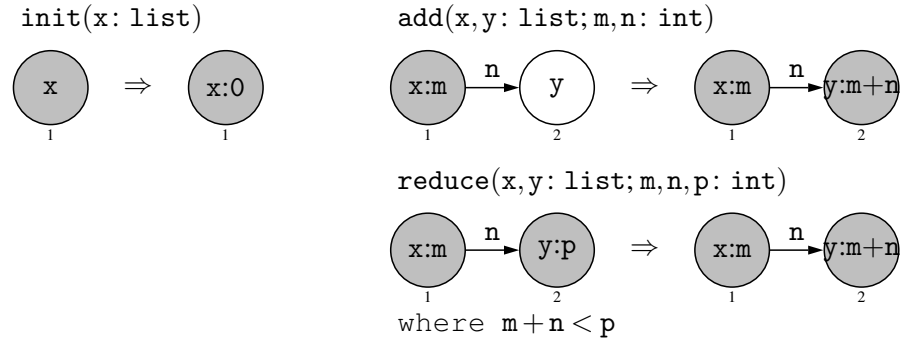}
\caption{A program calculating shortest distances}
\end{figure}
The rule schemata $\mtt{init}$ and $\mtt{add}$ append distances to the labels of nodes that have not been visited before, while $\mtt{reduce}$ decreases the distance of nodes that can be reached by a path that is shorter than the current distance.

To construct the conflicts of the rule schemata $\mtt{add}$ and $\mtt{reduce}$, their left-hand sides are overlayed. For example, the structure of the left-hand graph of $\mtt{reduce}$ can match the following structure in two different ways:
\begin{center}
\graphnodesize{1}
\grapharrowlength{.2}
\grapharrowwidth{.6}
\autodistance{1.0}
\graphnodecolour{1}

\begin{graph}(2.5,1.1)
\roundnode{1}(0.4,.6)[\grey]
\roundnode{2}(2.1,.6)[\grey]
\dirbow{1}{2}{.15}
\dirbow{2}{1}{.15}
\end{graph}
\end{center}
Consider a copy of $\mtt{reduce}$ in which the variables have been renamed to $\mtt{x'}$, $\mtt{m'}$, etc. To match $\mtt{reduce}$ and its copy differently requires solving the system of equations $\langle \mathtt{x{:}m} \meqq \mathtt{y'{:}p'},\, \mathtt{y{:}p} \meqq \mathtt{x'{:}m'}\rangle$. Solutions to these equations should be as general as possible to represent all potential conflicts resulting from the above overlay. In this simple example, it is clear that the substitution 
\[ \sigma = \{\mtt{x' \mapsto y},\,\mtt{m' \mapsto p},\,\mtt{y' \mapsto x},\,\mtt{p' \mapsto m}\} \]
is a most general solution. It gives rise to the following critical pair:\footnote{For simplicity, we ignore the condition of $\mtt{reduce}$.}
\begin{center}
\graphnodesize{1}
\grapharrowlength{.2}
\grapharrowwidth{.6}
\autodistance{1.0}
\graphnodecolour{1}

$\begin{array}{ccccc}
\begin{graph}(2.5,1.1)
\roundnode{A}(0.4,0.6)[\graphnodesize{1.1}\grey]
\autonodetext{A}{$\mathtt{x{:}p{+}n'}$}
\roundnode{B}(2.1,0.6)[\grey]
\autonodetext{B}{$\mathtt{y{:}p}$}
\autonodetext{A}[s]{\tiny 1}
\autonodetext{B}[s]{\tiny 2} 
\dirbow{A}{B}{.15}
\bowtext{A}{B}{.25}{$\mathtt{n}$}
\dirbow{B}{A}{.15}
\bowtext{B}{A}{.3}{$\mathtt{n'}$}
\end{graph}
&
\begin{graph}(.8,1.1)
\freetext(.4,.6){$\ldder$}
\end{graph}
&
\begin{graph}(2.5,1.1)
\roundnode{A}(0.4,0.6)[\grey]
\autonodetext{A}{$\mathtt{x{:}m}$}
\roundnode{B}(2.1,0.6)[\grey]
\autonodetext{B}{$\mathtt{y{:}p}$}
\autonodetext{A}[s]{\tiny 1}
\autonodetext{B}[s]{\tiny 2} 
\dirbow{A}{B}{.15}
\bowtext{A}{B}{.25}{$\mathtt{n}$}
\dirbow{B}{A}{.15}
\bowtext{B}{A}{.3}{$\mathtt{n'}$}
\end{graph}
&
\begin{graph}(.8,1.1)
\freetext(.4,.6){$\dder$}
\end{graph}
&
\begin{graph}(2.5,1.1)
\roundnode{A}(0.4,0.6)[\grey]
\autonodetext{A}{$\mathtt{x{:}m}$}
\roundnode{B}(2.1,0.6)[\grey]
\autonodetext{B}{$\mathtt{y{:}m{+}n}$}
\autonodetext{A}[s]{\tiny 1}
\autonodetext{B}[s]{\tiny 2} 
\dirbow{A}{B}{.15}
\bowtext{A}{B}{.25}{$\mathtt{n}$}
\dirbow{B}{A}{.15}
\bowtext{B}{A}{.3}{$\mathtt{n'}$}
\end{graph}
\end{array}$
\end{center}

In general though, equations can arise that have several independent solutions. For example, the equation $\langle \mtt{n{:}x} \meqq \mtt{y{:}2} \rangle$ (with $\mtt{n}$ of type $\mtt{int}$ and $\mtt{x}$,$\mtt{y}$ of type $\mtt{list}$) has the minimal solutions
\[ \sigma_1 = \{\mtt{x},\mtt{y} \mapsto \mtt{empty},\, \mtt{n \mapsto 2}\}\quad \text{and}\quad \sigma_2 = \{\mtt{x} \mapsto \mtt{z{:}2,\, \mtt{y} \mapsto \mtt{n{:}z}}\} \]
where $\mtt{empty}$ represents the empty list and $\mtt{z}$ is a list variable. 

Seen algebraically, we need to solve equations modulo the associativity and unit laws
\[ \mrm{AU} = \{x:(y:z) = (x:y):z,\; \mtt{empty}:x = x,\; x:\mtt{empty} = x\}. \]
This problem is similar to \emph{word unification}\/ \cite{Baader-Snyder01a}, which attempts to solve equations modulo associativity. 
(Some authors consider $AU$-unification as word unification, e.g. \cite{Jaffar90a}).
Solvability of word unification is decidable, albeit in PSPACE \cite{Plandowski99a}, but there is not always a finite complete set of solutions. The same holds for AU-unification (see Section~ \ref{sec:unification}). Fortunately, GP's syntax for left-hand sides of rule schemata forbids labels with more than one list variable. It turns out that by additionally forbidding shared list variables between different left-hand labels of a rule, rule overlays induce equation systems possessing finite complete sets of solutions.

This paper is the first step towards a static confluence analysis for GP programs. In Section~\ref{sec:unification}, we present a rule-based unification algorithm for equations between left-hand expressions of rule schemata. We show that the algorithm always terminates and that it is sound in that each substitution generated by the algorithm is an AU-unifier of the input equation. Moreover, the algorithm is complete in that every unifier of the input equation is an instance of some unifier in the computed set of solutions.

This paper is an extended version of the workshop paper \cite{Hristakiev-Plump15a}.
\section{GP Rule Schemata}
\label{sec:rule-schemata}

We refer to \cite{Plump12a, Bak-Faulkner-Plump-Runciman15a} for the definition of GP and more example programs. In this section, we define (unconditional) rule schemata which are the ``building blocks'' of graph programs. 

A \emph{graph}\/ over a label set $\mathcal{C}$ is a system $G=(V,E,s,t,l,m)$, where $V$ and $E$ are finite sets of \emph{nodes} (or \emph{vertices}) and \emph{edges}, $s,t\colon E\to V$ are the \emph{source} and \emph{target} functions for edges, $l\colon V \to \mathcal{C}$ is the node labelling function and $m\colon E\to \mathcal{C}$ is the edge labelling function. We write $\G(\mathcal{C})$ for the class of all graphs over $\mathcal{C}$. 

Figure~\ref{fig:ruleschema} shows an example for the declaration of a rule schema. 
\begin{figure}[htb]
 \begin{center}
  \input{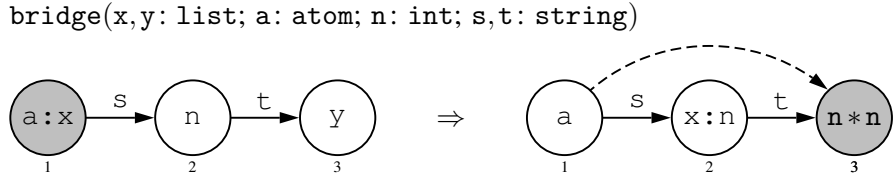}
 \end{center}
\caption{Declaration of a GP rule schema \label{fig:ruleschema}}
\end{figure}
The types $\mtt{int}$ and $\mtt{string}$ represent integers and character strings. Type $\mtt{atom}$ is the union of $\mtt{int}$ and $\mtt{string}$, and $\mtt{list}$ represents lists of atoms. Given lists $l_1$ and $l_2$, we write $l_1\mathop{\mtt{:}}l_2$ for the concatenation of $l_1$ and $l_2$. The empty list is denoted by $\mtt{empty}$. In pictures of graphs, nodes or edges without label (such as the dashed edge in Figure~\ref{fig:ruleschema}) are implicitly labelled with the empty list. We equate lists of length one with their entry to obtain the syntactic and semantic \emph{subtype}\/ relationships shown in Figure~\ref{fig:subtypes}. For example, all labels in Figure~\ref{fig:ruleschema} are list expressions. 

\begin{figure}[!hb]
\input{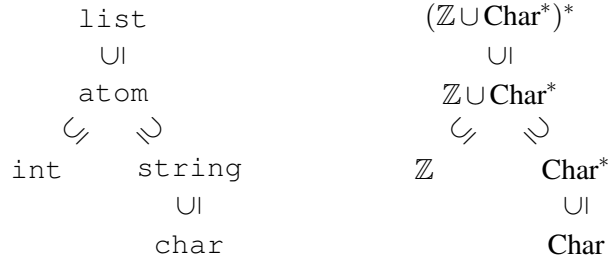}
\caption{Subtype hierarchy for labels \label{fig:subtypes}}
\end{figure}

Figure~\ref{fig:abstract_label_syntax} gives a grammar in Extended Backus-Naur Form defining the abstract syntax of labels. The function \texttt{length} return the length of a variable, while \texttt{indeg} and \texttt{outdeg} access the indegree resp.\ outdegree of a left-hand node in the host graph. 

\begin{figure}[!htb] 
\begin{center}
\renewcommand{\arraystretch}{1.2}
\begin{tabular}{lcl}
Integer & ::= & Digit \{Digit\} $\mid$ IVar \\
&& $\mid$ `$-$' Integer $\mid$ Integer ArithOp Integer \\
&& $\mid$ \texttt{length} `(' LVar $\mid$ AVar $\mid$ SVar `)' \\
&& $\mid$ (\texttt{indeg} $\mid$ \texttt{outdeg}) `(' Node `)' \\
ArithOp & ::= & `\texttt{+}' $\mid$ `\texttt{-}' $\mid$ `$\ast$' $\mid$ `\texttt{/}' \\
String & ::= & `\,``\,' \{Char\} `\,''\,' $\mid$ SVar $\mid$ String `.' String\\
Atom & ::= & Integer $\mid$ String $\mid$ AVar \\
List & ::= & \texttt{empty} $\mid$ Atom $\mid$ LVar $\mid$ List `:' List \\
Label & ::= & List [Mark] \\
Mark & ::= & \texttt{red} $\mid$  \texttt{green} $\mid$ \texttt{blue} $\mid$  \texttt{grey} $\mid$ \texttt{dashed} $\mid$ \texttt{any}
\end{tabular}
\end{center}
\caption{Abstract syntax of rule schema labels \label{fig:abstract_label_syntax}}
\end{figure}

Figure~\ref{fig:abstract_label_syntax} defines four syntactic categories of expressions: Integer, String, Atom and List, where Integer and String are subsets of Atom which in turn is a subset of List. Category Node is the set of node identifiers used in rule schemata. Moreover, IVar, SVar, AVar and LVar are the sets of variables of type $\mtt{int}$, $\mtt{string}$, $\mtt{atom}$ and $\mtt{list}$. We assume that these sets are disjoint and define $\mrm{Var} = \mrm{IVar} \cup \mrm{SVar} \cup \mrm{AVar} \cup \mrm{LVar}$. The mark components of labels are represented graphically rather than textually. 

Each expression $l$\/ has a unique smallest type, denoted by $\mrm{type}(l)$, which can be read off the hierarchy in Figure~\ref{fig:subtypes} after $l$\/ has been normalised with the rewrite rules shown at the beginning of Subsection~\ref{subsec:unif-algorithm}. We write $\mrm{type}(l_1) < \mrm{type}(l_2)$ or $\mrm{type}(l_1) \leq \mrm{type}(l_2)$ to compare types according to the subtype hierarchy. If the types of $l_1$ and $l_2$ are incomparable, we write $\mrm{type}(l_1) \mathop{\|} \mrm{type}(l_2)$.

The values of rule schema variables at execution time are determined by graph matching. To ensure that matches induce unique ``actual parameters'', expressions in the left graph of a rule schema must have a simple shape.

\begin{definition}[Simple expression]
\label{def:simple_label}
\normalfont
A \emph{simple}\/ expression contains no arithmetic operators, no length or degree operators, no string concatenation, and at most one occurrence of a list variable.
\end{definition}
In other words, simple expressions contain no unary or binary operators except list concatenation, and at most one occurrence of a list variable. For example, given the variable declarations of Figure~\ref{fig:ruleschema}, \texttt{a:x} and \texttt{y:n:n} are simple expressions whereas $\mtt{n \ast 2}$ or \texttt{x:y} are not simple.

Our definition of simple expressions is more restrictive than that in \cite{Plump12a} because we exclude string concatenation and the unary minus. These operations (especially string concatenation) would considerably inflate the unification algorithm and its completeness proof, without posing a substantial challenge.

\begin{definition}[Rule schema]
\label{def:ruleschema}
\normalfont
A \emph{rule schema} $\tuple{L,\, R,\, I}$ consists of graphs $L,R$\/ in $\G(\mrm{Label})$ and a set $I$, the \emph{interface}, such that $I = V_L \cap V_R$. All labels in $L$\/ must be simple and all variables occurring in $R$\/ must also occur in $L$. 
\end{definition}

When a rule schema is graphically declared, as in Figure \ref{fig:ruleschema}, the interface $I$\/ is represented by the node numbers in $L$\/ and $R$. Nodes without numbers in $L$\/ are to be deleted and nodes without numbers in $R$\/ are to be created. All variables in $R$\/ have to occur in $L$\/ so that for a given match of $L$\/ in a host graph, applying the rule schema produces a graph that is unique up to isomorphism.

\begin{assumption}[Left-linearity] We assume that rule schemata $\tuple{L,\, R,\, I}$ are \emph{left-linear}, that is, the labels of different nodes or edges in $L$\/ do not contain the same list variable.
\end{assumption}

This assumption is necessary to ensure that the solutions of the equations resulting from overlaying two rule schemata can be represented by a finite set of unifiers. For example, without this assumption it is easy to construct two rule schemata that induce the system of equations $\langle \mtt{x:1 \meqq y,\; y \meqq 1:x}\rangle$. This system has solutions $\{\mtt{x \mapsto empty, y \mapsto 1}\}$, $\{\mtt{x \mapsto 1, y \mapsto 1:1}\}$, \newline $\{\mtt{x \mapsto 1:1, y \mapsto 1:1:1}\},\ldots$  which form a infnite, minimal and compete set of solutions (See Definition \ref{def:completeset} below).

\section{Unification}
\label{sec:unification}

We start with introducing some technical notions such as substitutions, unification problems and complete sets of unifiers. Then, in Subsection \ref{subsec:unif-algorithm}, we present our unification algorithm. In Subsection \ref{subsec:unif-properties}, we prove that the algorithm terminates and is sound.

\subsection{Preliminaries}
\label{subsec:preliminaries}

A \emph{substitution}\/ is a family of mappings $\sigma = (\sigma_X)_{X \in \{I,S,A,L\}}$ where $\sigma_I\colon \mrm{IVar} \to \mrm{Integer}$, $\sigma_S\colon \mrm{SVar} \to \mrm{String}$, $\sigma_A\colon \mrm{AVar} \to \mrm{Atom}$, $\sigma_L\colon \mrm{LVar} \to \mrm{List}$. Here Integer, String, Atom and List are the sets of expressions defined by the GP label grammar of Figure \ref{fig:abstract_label_syntax}. For example, if $\mtt{z} \in \mrm{LVar}$,  $\mtt{x} \in \mrm{IVar}$ and $\mtt{y} \in \mrm{SVar}$, then we write $\sigma = \{\mtt{x} \mapsto \mtt{x+1},\, \mtt{z} \mapsto \mtt{y:-x:y}\}$ for the substitution that maps $\mtt{x}$ to $\mtt{x+1}$, $\mtt{z}$ to $\mtt{y:-x:y}$ and every other variable to itself.

Applying a substitution $\sigma$ to an expression $t$, denoted by $t\sigma$, means to replace every variable $x$ in $t$ by $\sigma(x)$ simultaneously. In the above example, $(\mtt{z:-x})\sigma = \mtt{y:-x:y:-(x+1)}$. 

By $\dom(\sigma)$ we denote the set $\{x \in \mrm{Var} \mid \sigma(x) \neq x\}$ and by $\vran(\sigma)$ the set of variables occurring in the expressions $\{\sigma(x) \mid x \in \mrm{Var}\}$.
A substitution $\sigma$ is \emph{idempotent}\/ if $\dom(\sigma)\mathop{\cap}\vran(\sigma)~=~\emptyset$.
The \emph{composition} of two substitutions $\sigma$ and $\theta$, is defined as $ x (\sigma \circ \theta) = \begin{cases}
 (x\sigma) \theta & \text{if~}x \in \dom(\sigma) \\
 x\theta & \text{otherwise}
\end{cases}$
and is an associative operation.

\begin{definition}[Unification problem]
\normalfont
A \emph{unification problem} is a pair of an equation and a substitution
\[ P = \langle s \meqq t,\sigma_P \rangle\]
where $s$ and $t$ are simple list expressions without common variables.
\end{definition}
The symbol $=^?$ signifies that the equation must be \emph{solved}\/ rather than having to hold for all values of variables. The purpose of $\sigma_P$ is  for the unification algorithm (Section \ref{sec:unifAlgorithm}) to record a partial solution. An illustration of this concept will be seen in \autoref{Unifexample}.

In \autoref{sec:rule-schemata}, we already assumed that GP rule schemata need to be left-linear. Now, the problem of solving a system of equations $\{s_1 = t_1, s_2 = t_2\}$ can be broken down to solving individual equations and combining the answers - if $\sigma_1$ and $\sigma_2$ are solutions to each individual equation, then $\sigma_1 \cup \sigma_2$ is a solution to the combined problem as $\sigma_1$ and $\sigma_2$ do not share variables.

Consider the equational axioms for associativity and unity,
\[ \mrm{AU} = \{\mtt{x:(y:z)} = \mtt{(x:y):z},\; \mtt{empty:x} = \mtt{x},\; \mtt{x:empty} = \mtt{x}\} \]
where $\mtt{x},\mtt{y},\mtt{z}$ are variables of type \texttt{list}, and let $=_{\mrm{AU}}$ be the equivalence relation on expressions generated by these axioms. 

\begin{definition}[Unifier]
\normalfont
Given a unification problem $P =  \langle s \meqq t,\sigma_P \rangle$ a \emph{unifier}\/ of $P$ a is a substitution $\theta$ such that 
\[s \theta =_{\mrm{AU}} t \theta \text{~and~}x_i \theta =_{\mrm{AU}} t_i \theta \]
for each binding $\{\mtt{x_i \mapsto t_i}\}$ in $\sigma_P$\enspace.
\end{definition}


The set of all unifiers of $P$ is denoted by $\mathcal{U}(P)$. We say that $P$ is \emph{unifiable} if $\mathcal{U}(P) \neq \emptyset$.

The special unification problem $\mathtt{fail}$ represents failure and has no unifiers. A problem $P~=~ \langle s \meqq t,\varnothing \rangle$ is \emph{initial} and $P = \langle \varnothing,\sigma_P \rangle$ is \emph{solved}.

A substitution $\sigma$ is \emph{more general}\/ on a set of variables $X$ than a substitution $\theta$ if there exists a substitution $\lambda$ such that $x\theta =_{\mrm{AU}} x\sigma\lambda $ for all $x \in X$. 
In this case we write $\sigma \leqq_X \theta$ and say that $\theta$ is an \emph{instance} of $\sigma$ on $X$. Substitutions $\sigma$ and $\theta$ are \emph{equivalent}\/ on $X$, denoted by $\sigma=_X\theta$, if $\sigma \leqq_X \theta$ and $\theta \leqq_X \sigma$. 

\begin{definition}[Complete set of unifiers \cite{Plotkin72}]
\label{def:completeset}
\normalfont
A set $\mathcal{C}$ of substitutions is a \emph{complete set of unifiers} of a unification problem $P$ if
\begin{enumerate}
\item (Soundness) $\mathcal{C} \subseteq \mathcal{U}(P)$, that is, each substitution in $\mathcal{C}$ is a unifier of $P$, and
\item (Completeness) for each $\theta \in \mathcal{U}(P)$ there exists $\sigma \in \mathcal{C}$ such that $\sigma \leqq_X \theta$, where $X~=~\mrm{Var}(P)$ is the set of variables occurring in $P$.
\end{enumerate}	
Set $\mathcal{C}$ is also \emph{minimal} if each pair of distinct unifiers in $\mathcal{C}$ are incomparable with respect to $\leqq_X$.
\end{definition}

If a unification problem $P$\/ is not unifiable, then the empty set $\varnothing$ is a minimal complete set of unifiers of $P$. 

%
For simplicity, we replace $=^?$ with $=$ in unification problems from now on.

\begin{example}
\normalfont
The minimal complete set of unifiers of the problem $\langle \mtt{a:x}=\mtt{y:2}\rangle$ (where \texttt{a} is an atom variable and \texttt{x},\texttt{y} are list variables) is $\{\sigma_1,\sigma_2\}$ with
\[
\sigma_1 = \{\mtt{a} \mapsto 2,\, \mtt{x} \mapsto \mtt{empty},\, \mtt{y} \mapsto \mtt{empty}\}\quad \text{and}\quad
\sigma_2 = \{\mtt{x} \mapsto \mtt{z:2},\, \mtt{y} \mapsto \mtt{a:z}\}.
\]
We have $\sigma_1 (\mtt{a:x}) = \mtt{2:empty} =_{\mrm{AU}} \mtt{2} =_{\mrm{AU}} \mtt{empty:2} = \sigma_1 (\mtt{y:2})$ and $\sigma_2 (\mtt{a:x}) = \mtt{a:z:2} = \sigma_2 (\mtt{y:2})$. Other unifiers such as $\sigma_3 = \{\mtt{x} \mapsto 2,\, \mtt{y} \mapsto \mtt{a}\}$ are instances of $\sigma_2$.
\end{example}

\subsection{Unification Algorithm}
\label{subsec:unif-algorithm}

We start with some notational conventions for the rest of this section:
\begin{itemize}
\item $L,M$\/ stand for simple expressions,
\item $\mtt{x,y,z}$ stand for variables of any type (unless otherwise specified),
\item $\mtt{a,b}$ stand for 
\begin{enumerate}[(i)]
\item simple string or integer expressions, or
\item string, integer or atom variables
\end{enumerate}
\item $\mtt{s,t}$ stand for 
\begin{enumerate}[(i)]
\item simple string or integer expressions, or
\item variables of any type
\end{enumerate}
\end{itemize}

\paragraph{\normalfont\textbf{Preprocessing.}}
Given a unification problem $P = \langle s \meqq t,\sigma \rangle$, we rewrite the terms in $s$ and $t$ using the reduction rules
\[ L:\mtt{empty} \to L\quad \text{and}\quad \mtt{empty}:L \to L \]
where $L$ ranges over list expressions. These reduction rules are applied exhaustively before any of the transformation rules. For example, 
\[ \mtt{x:empty:1:empty} \to \mtt{x:1:empty} \to \mtt{x:1}. \]
We call this process \emph{normalization}. In addition, the rules are applied to each instance of a transformation rule (that is, once the formal parameters have been replaced with actual parameters) before it is applied, and also after each transformation rule application. 

\paragraph{\textbf{Transformation rules.}}

Figure \ref{rules} shows the transformation rules, the essence of our approach, in an inference system style where each rule consists of a premise and a conclusion.

\vspace{\baselineskip}
\begin{tabular}{lp{10cm}}
\textsf{Remove}: & deletes trivial equations \\
\textsf{Decomp1}: &  syntactically equates a list variable with an atom expression or list variable \\
\textsf{Decomp1'}: &  syntactically equates an atom variable with an expression of lesser type \\
\textsf{Decomp2/2'}: &  assigns a list variable to start with another list variable  \\
\textsf{Decomp3}: &  removes identical symbols from the head  \\
\textsf{Decomp4}: &  solves an atom variable  \\
\textsf{Subst1}: & solves a variable\\
\textsf{Subst2}: & assigns $\mtt{empty}$ to a list variable \\
\textsf{Subst3}: & assigns an atom prefix to a list variable \\
\textsf{Orient1/2}: & moves variables to left-hand side \\
\textsf{Orient3}: & moves variables of larger type to left-hand side \\
\textsf{Orient4}: &  moves a list variable to the left-hand side  \\
\end{tabular}
\vspace{\baselineskip}

%
\setlength{\footskip}{40pt}
\begin{figure}[!p]
\setstretch{3.6}

	\infer[\mathsf{Remove}]
	{ \langle \varnothing, \sigma \rangle}
	{ \langle L = L, \sigma \rangle }
		
	\infer[\mathsf{Decomp1}]
	{\langle L = M,~\sigma \circ \{\mtt{x \mapsto s}\}\rangle}
	{\langle \mtt{x}:L = \mtt{s}:M, \sigma \rangle
	& L \neq \mathtt{empty} 
	& \mathrm{type}(\mtt{x}) = \ttt{list}
	}
	
	\infer[\mathsf{Decomp1'}]
	{\langle (L = M) \{\mtt{x \mapsto a}\} ,~\sigma \circ \{\mtt{x \mapsto a}\} \rangle}
	{\langle \mtt{x}:L = \mtt{a}:M , \sigma \rangle
	& L \neq \mathtt{empty} 
	& \mathrm{type}(\mtt{a}) \leq \mathrm{type}(\mtt{x}) \leq \ttt{atom}
	}	 
	 	
	\infer[\mathsf{Decomp2}]
	{\langle \mtt{x'}:L = M , \sigma \circ \{ \mtt{x \mapsto y:x'}\} \rangle
	}
	{\langle \mtt{x}:L = \mtt{y}:M , \sigma \rangle
	& L \neq \mathtt{empty} 
	& \mathrm{type}(\mtt{x}) = \ttt{list}
	& \mathrm{type}(\mtt{y}) = \ttt{list}
	& \mtt{x'}\text{ is a fresh list variable} 
	}	
	 
    \infer[\mathsf{Decomp2'}]
	{\langle L = \mtt{y'}:M , \sigma \circ \{ \mtt{y \mapsto x:y'}\} \rangle
	}
	{\langle \mtt{x}:L = \mtt{y}:M , \sigma \rangle
	& L \neq \mathtt{empty} 
	& \mathrm{type}(\mtt{x}) = \ttt{list}
	& \mathrm{type}(\mtt{y}) = \ttt{list}
	& \mtt{y'}\text{ is a fresh list variable} 
	}	
		
	\begin{tabular}{lr}    
	\infer[\mathsf{Decomp3}]
	{\langle L = M, \sigma \rangle
	}
	{\langle \mtt{s}:L = \mtt{s}:M, \sigma \rangle
	}	 & 	
	\infer[\mathsf{Decomp4}]
	{\langle \mtt{empty=y} , \sigma \circ \{ \mtt{x \mapsto a } \} \rangle
	}
	{\langle \mtt{x=a:y}, \sigma \rangle
	& \mathrm{type(x)} = \ttt{atom}
	& \mathrm{type(y)} = \ttt{list}
	}
	\end{tabular}	
	
	\infer[\mathsf{Subst1}]
	{\langle \varnothing , \sigma \circ \{\mtt{x} \mapsto L\} \rangle
	}
	{\langle \mtt{x} = L , \sigma \rangle
	& \mtt{x} \notin \mathrm{Var}(L) 
	&  \mathrm{type}(\mtt{x}) \geq \mathrm{type}(L) \footnotetext{$List \geq Atom$, $Atom \geq String$, $Atom \geq Int$}
	}	
	  
	\infer[\mathsf{Subst2}]
	{\langle L = M, \sigma \circ \{\mtt{x} \mapsto \texttt{empty}\} \rangle
	}
	{\langle \mtt{x}:L = M, \sigma \rangle
	& L \neq \mathtt{empty} 
	& \mathrm{type}(\mtt{x}) = \ttt{list}
	}
	 
	\infer[\mathsf{Subst3}]
	{\langle \mtt{x'}:L = M, \sigma \circ \{\mtt{x \mapsto a:x'}\} \rangle 
	}
	{\langle \mtt{x}:L = \mtt{a}:M\}, \sigma \rangle
	& L \neq \mathtt{empty} 
	& \mtt{x'}\text{ is a fresh list variable} 
	& \mathrm{type}(\mtt{x}) = \ttt{list}}
	
	\infer[\mathsf{Orient1}]
	{\langle \mtt{x}:L = \mtt{a}:M, \sigma \rangle }
	{\langle \mtt{a}:M = \mtt{x}:L, \sigma \rangle
	& \mtt{a}\text{ is not a variable}
	}

	\infer[\mathsf{Orient2}]
	{\langle \mtt{y} = \mtt{x}:L, \sigma \rangle
	}
	{\langle \mtt{x}:L = \mtt{y}, \sigma \rangle
	& L \neq \ttt{empty} 
	& \mathrm{type}(\mtt{x}) = \mathrm{type}(\mtt{y})
	} 
	
	\begin{tabular}{lr}
    \infer[\mathsf{Orient3}]
	{\langle \mtt{x}:L = \mtt{y}:M, \sigma \rangle
	}
	{\langle \mtt{y}:M = \mtt{x}:L, \sigma \rangle
	& \mathrm{type}(\mtt{y}) < \mathrm{type}(\mtt{x})
	} & 
	
	\infer[\mathsf{Orient4}]
	{\langle \mtt{x}:L = \ttt{empty}, \sigma \rangle
	}
	{\langle \ttt{empty} = \mtt{x}:L, \sigma \rangle
	& \mathrm{type}(\mtt{x}) = \ttt{list}
	}
	\end{tabular}		
	
\caption{Transformation rules}
\label{rules}
\setstretch{1}
\end{figure}
	
The rules induce a transformation relation $\Rightarrow$ on unification problems. In order to apply any of the rules to a problem $P$, the problem part of its premise needs to be \emph{matched} onto $P$. 
Subsequently, the boolean condition of the premise is checked and the rule \emph{instance} is normalized so that its premise is identical to $P$. 

For example, the rule \textsf{Orient3} can be matched to $P = \langle \mtt{a:2} = \mtt{m}, \;\sigma \rangle$ (where $\mtt{a}$ and $\mtt{m}$ are variables of type $\mtt{atom}$ and $\mtt{list}$, respectively) by setting $\mtt{y} \mapsto \mtt{a}$, $\mtt{x} \mapsto \mtt{m}$, $M \mapsto 2$, and $L \mapsto \mtt{empty}$. 
The rule instance and its normal form are then

\begin{tabular}{lcl}
&~&~\\
\infer{\langle \mtt{m}:\mtt{empty} = \mtt{a}:2, \sigma\rangle}{\langle \mtt{a}:2 = \mtt{m}:\mathtt{empty} , \sigma\rangle}
& \raisebox{2mm}{and} &
\infer{\langle\mtt{m} = \mtt{a}:2 , \sigma\rangle}{\langle\mtt{a}:2 = \mtt{m} , \sigma\rangle} \\
&~&~
\end{tabular}

where the conclusion of the normal form is the result of applying \textsf{Orient3} to $P$.

\newpage
Showing that a unification problem has no solution can be a lengthy affair because we need to compute all normal forms with respect to $\Rightarrow$. Instead, the  rules \textsf{Occur} and \textsf{Clash1} to \textsf{Clash4}, shown in \autoref{failure}, introduce \emph{failure}. Failure cuts off parts of the search tree for a given problem $P$. This is because if $P \dder \mrm{fail}$, then $P$ has no unifiers and it is not necessary to compute another normal form. Effectively, the failure rules have precedence over the other rules. They are justified by the following lemmata.


\begin{figure}[t]
\setstretch{3.6}

	\begin{tabular}{lr}
	\infer[\mathsf{Occur}]{\mrm{fail}}
	{\langle \mtt{x} = L,\sigma \rangle
	& \mtt{x} \in \mrm{Var}(L) 
	& \mtt{x} \neq L 
	& \mathrm{type}(\mtt{x}) = \ttt{list}
	} 
	&
	\infer[\mathsf{Clash2}]{\mathrm{fail}}
	{\langle \mtt{a}:L = \ttt{empty},\sigma \rangle 
	} \\
	
	\infer[\mathsf{Clash1}]{\mathrm{fail}}
	{\langle \mtt{a}:L = \mtt{b}:M,\sigma \rangle
	& \mtt{a} \neq \mtt{b} 
	& \mathrm{Var}(\mtt{a}) = \emptyset = \mathrm{Var}(\mtt{b})
	}
	 
	& 
	\infer[\mathsf{Clash3}]{\mathrm{fail}}
	{\langle \ttt{empty} = \mtt{a}:L ,\sigma \rangle 
	} \\
	
	\infer[\mathsf{Clash4}]{\mathrm{fail}}
	{\langle \mtt{x} = L, \sigma \rangle
	& \mathrm{type}(\mtt{x}) \mathop{\|} \mathrm{type}(L)} 
	\end{tabular}

\caption{Failure rules}
\label{failure}
\setstretch{1}
\end{figure}

\begin{lemma}
\label{1.6}
A normalised equation $\mtt{x} \meq L$ with $\mtt{x} \neq L$ has no solution if $L$ is a simple expression, $\mtt{x} \in Var(L)$ and $\mathrm{type}(\mtt{x}) = \ttt{list}$.
\end{lemma}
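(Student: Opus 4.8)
The plan is to argue by a length-counting argument, exploiting the fact that a \emph{simple} expression contains at most one list variable. First I would use the hypotheses to pin down the shape of $L$. Since $L$ is simple it contains at most one occurrence of a list variable; since $\mtt{x}$ is a list variable with $\mtt{x} \in \mrm{Var}(L)$, that sole list variable must be $\mtt{x}$, and it occurs exactly once. Hence $L$ has the form $\alpha \mathop{\mtt{:}} \mtt{x} \mathop{\mtt{:}} \beta$, where $\alpha$ and $\beta$ are (possibly empty) concatenations whose entries are all atoms (integer or string constants, or variables of type $\ttt{int}$, $\ttt{string}$ or $\ttt{atom}$); no further list variable can appear. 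Because the equation is normalised, $\alpha$ and $\beta$ contain no occurrence of $\mtt{empty}$, and because $\mtt{x} \neq L$ the expression $L$ cannot have collapsed to $\mtt{x}$ alone, so at least one atom actually occurs in $\alpha$ or $\beta$.

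Next I would introduce a length measure $\mrm{len}$ on lists, with $\mrm{len}(\mtt{empty}) = 0$, $\mrm{len}(a) = 1$ for every atom $a$, and $\mrm{len}(l_1 \mathop{\mtt{:}} l_2) = \mrm{len}(l_1) + \mrm{len}(l_2)$. The key observation is that each axiom of $\mrm{AU}$ preserves this measure, so $\mrm{len}$ is well defined on $=_{\mrm{AU}}$-equivalence classes; in particular $l =_{\mrm{AU}} l'$ forces $\mrm{len}(l) = \mrm{len}(l')$. I would then record that, under the type discipline of substitutions from Subsection~\ref{subsec:preliminaries}, every atom entry of $\alpha$ and $\beta$ is mapped to an atom, hence to a list of length exactly $1$: an atom variable is replaced by an Atom, an integer variable by an Integer, a string variable by a String, and constants are already atoms of length one. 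Consequently, for every substitution $\theta$, the number $\mrm{len}(\alpha\theta) + \mrm{len}(\beta\theta)$ equals the total number of atom entries in $\alpha$ and $\beta$, which is at least $1$.

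Finally I would derive the contradiction. For an arbitrary substitution $\theta$, writing $n = \mrm{len}(\mtt{x}\theta)$, we have
\[ \mrm{len}(L\theta) = \mrm{len}(\alpha\theta) + \mrm{len}(\mtt{x}\theta) + \mrm{len}(\beta\theta) = \big(\mrm{len}(\alpha\theta) + \mrm{len}(\beta\theta)\big) + n \geq n + 1 > n = \mrm{len}(\mtt{x}\theta). \]
Thus $L\theta$ and $\mtt{x}\theta$ have different lengths and so cannot be $=_{\mrm{AU}}$-equal. Since $\theta$ was arbitrary, no unifier of $\mtt{x} \meq L$ exists, which is the claim (and which justifies the failure rule \textsf{Occur}).

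The part needing the most care --- the main obstacle --- is the structural step: justifying rigorously that every subexpression of $L$ other than the single occurrence of $\mtt{x}$ is an atom that is forced to have length exactly one under any type-respecting substitution. This rests on Definition~\ref{def:simple_label} (at most one list variable, no string concatenation and no operators) together with the typing of substitutions; once this is established, the well-definedness of $\mrm{len}$ on $=_{\mrm{AU}}$-classes and the final length computation are routine.
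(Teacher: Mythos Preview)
Your argument is correct and follows essentially the same line as the paper's proof. Both first pin down the shape of $L$ using simplicity, normalisation, and $\mtt{x}\neq L$ (the paper writes $L=s_1{:}\cdots{:}s_n$ with $n\geq 2$, no $\mtt{empty}$ among the $s_i$, and $\mtt{x}$ the unique, unrepeated list variable), and then conclude that no substitution can equalise $\mtt{x}$ and $L$ modulo~$\mrm{AU}$. The only difference is presentational: the paper leaves the final implication ``It follows $\sigma(\mtt{x}) \not=_{\mrm{AU}} \sigma(L)$'' unjustified, whereas you spell it out via an explicit $\mrm{AU}$-invariant length measure and the observation that every non-$\mtt{x}$ entry of $L$ is an atom mapped to a length-$1$ value under any type-respecting substitution.
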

\begin{proof}
Since $\mtt{x} \in Var(L)$ and $\mtt{x} \neq L$, $L$ is of the form $s_1 : s_2 : \ldots : s_n$ with $n \geq 2$ and $\mtt{x} \in Var(s_i)$ for some $1 \leq i \leq n$. As $L$\/ is normalised, none of the terms $s_i$ contains the constant $\ttt{empty}$.
Also, since $L$ is simple, it contains no list variables other than $\mtt{x}$ and $\mtt{x}$ is not repeated. It follows $\sigma (\mtt{x}) \not=_{\mrm{AU}} \sigma (L)$ for every substitution $\sigma$.
\end{proof}

\begin{lemma}
\label{1.7}
Equations of the form $\mtt{a}:L = \ttt{empty}$ or $\ttt{empty} = \mtt{a}:L$ have no solution if $\mtt{a}$ is an atom expression.
\end{lemma}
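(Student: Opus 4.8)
The plan is to use a length invariant. First I would define, for every list expression $t$, a length $\ell(t) \in \N$ given by the number of atom entries that remain after all occurrences of $\mtt{empty}$ are discarded; formally $\ell(\mtt{empty}) = 0$, $\ell(t_1 \mathop{\mtt{:}} t_2) = \ell(t_1) + \ell(t_2)$, and $\ell(\mtt{a}) = 1$ for every atom $\mtt{a}$. The crucial observation is that $\ell$ is invariant under $=_{\mrm{AU}}$: associativity merely re-brackets a concatenation without changing its atom entries, while the two unit laws only remove an $\mtt{empty}$, which contributes $0$ to the sum. Hence $s =_{\mrm{AU}} t$ implies $\ell(s) = \ell(t)$.

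Next I would record that substitutions preserve the atom/list typing: since $\mtt{a}$ is an atom expression, $\mtt{a}\theta$ is again an atom expression for every substitution $\theta$, so $\ell(\mtt{a}\theta) = 1$. Consequently $\ell((\mtt{a}:L)\theta) = \ell(\mtt{a}\theta) + \ell(L\theta) = 1 + \ell(L\theta) \geq 1$, because $\ell(L\theta) \geq 0$.

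Finally, suppose for contradiction that some $\theta$ solves $\mtt{a}:L \meq \mtt{empty}$, that is, $(\mtt{a}:L)\theta =_{\mrm{AU}} \mtt{empty}\theta = \mtt{empty}$. By the invariance established above we get $\ell((\mtt{a}:L)\theta) = \ell(\mtt{empty}) = 0$, contradicting $\ell((\mtt{a}:L)\theta) \geq 1$. Thus the equation has no solution, and the mirror equation $\mtt{empty} \meq \mtt{a}:L$ is handled identically since $=_{\mrm{AU}}$ is symmetric.

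The only genuinely delicate point, and the step I expect to be the main obstacle to make fully rigorous, is the invariance of $\ell$ under $=_{\mrm{AU}}$ together with the claim that an atom expression instantiates to something of length exactly one under every substitution. Both are intuitively clear from the grammar of Figure~\ref{fig:abstract_label_syntax} and the subtype hierarchy, but a careful argument should verify invariance by induction over the generation of $=_{\mrm{AU}}$ from its defining axioms (checking each axiom instance and closure under reflexivity, symmetry, transitivity, and the congruence rule for concatenation), and should appeal to the fact that a substitution maps a variable of a given type to an expression of that type, so an atom-typed expression can never instantiate to $\mtt{empty}$ or to a list of length at least two.
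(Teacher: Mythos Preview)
Your argument is correct. The paper itself states Lemma~\ref{1.7} without proof, evidently treating it as obvious from the semantics of atoms and concatenation; your length-invariant argument is exactly the natural way to make that intuition precise, and the two points you flag as ``delicate'' (invariance of $\ell$ under $=_{\mrm{AU}}$ and the fact that atom expressions instantiate to atoms) are routine inductions on the definitions given in Figure~\ref{fig:abstract_label_syntax} and the definition of substitution.
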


\begin{lemma}
\label{1.8}
An equation $\mtt{a}:L = \mtt{b}:M$ with $\mtt{a} \neq \mtt{b}$ has no solution if $\mtt{a}$ and $\mtt{b}$ are atom expressions without variables.
\end{lemma}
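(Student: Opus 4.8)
The plan is to mirror the reasoning of Lemma~\ref{1.6} and to reduce $=_{\mrm{AU}}$ to equality of words over an alphabet of atoms and variables. First I would note that, since $\mtt{a}$ and $\mtt{b}$ are atom expressions containing no variables, they are constant atoms (a string or integer literal), so every candidate substitution $\sigma$ fixes them: $\mtt{a}\sigma = \mtt{a}$ and $\mtt{b}\sigma = \mtt{b}$. Consequently, a unifier $\sigma$ of $\mtt{a}:L \meq \mtt{b}:M$ would have to satisfy $\mtt{a}:L\sigma =_{\mrm{AU}} \mtt{b}:M\sigma$, and the argument becomes a statement about the normal forms of these two lists.

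The second step is to make precise that $=_{\mrm{AU}}$ is exactly equality in the free monoid generated by the irreducible pieces of a list. Exhaustively applying the unit and associativity laws flattens nested concatenations and deletes every occurrence of $\ttt{empty}$, so each side of the equation possesses a well-defined normal form that is a (possibly empty) sequence of atoms and list or atom variables. Two expressions are AU-equal precisely when these sequences coincide letter by letter, which is the key invariant I would exploit.

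The third step identifies the head letter of each normal form. Because $\mtt{a}$ is a nonempty constant atom sitting at the front of $\mtt{a}:L\sigma$, normalisation never deletes it (the unit law only removes $\ttt{empty}$) and never merges it with whatever $L\sigma$ contributes (concatenation does not fuse adjacent letters); hence $\mtt{a}$ is the first letter of the left normal form. Symmetrically, $\mtt{b}$ is the first letter of the right normal form. Equality of the two sequences would force $\mtt{a} = \mtt{b}$, contradicting the hypothesis $\mtt{a} \neq \mtt{b}$. Therefore no such $\sigma$ exists and the equation is unsolvable.

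The main obstacle, as in Lemma~\ref{1.6}, is the rigorous justification of the free-monoid viewpoint: namely, that AU-equivalence coincides with equality of normalised letter sequences, and in particular that the leading atom $\mtt{a}$ cannot be cancelled, absorbed, or shifted by the unit and associativity axioms under any instantiation of the variables in $L$. Once this \emph{head-letter invariance} is established, the conclusion is immediate; everything else is the routine observation that variable-free atoms are left untouched by substitution.
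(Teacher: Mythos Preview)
Your argument is correct. The paper itself does not supply a proof for Lemma~\ref{1.8}; it is stated immediately after Lemmas~\ref{1.6} and~\ref{1.7} and left as evident, so there is no ``paper's proof'' to compare against beyond the spirit of Lemma~\ref{1.6}. Your reduction to the free-monoid view of $=_{\mrm{AU}}$ and the head-letter invariance of a variable-free atom is exactly the right idea and would be accepted as the intended justification.
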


\paragraph{\normalfont\textbf{The algorithm.}} 
\label{sec:unifAlgorithm}
The unification algorithm in Figure \ref{alg} starts by normalizing the input equation, as explained above. It uses a queue of unification problems to search the derivation tree of $P$ with respect to $\dder$ in a breadth-first manner. The first step is to put the normalized problem $P$ on the queue.

The variable $\mathtt{next}$ holds the head of the queue. If $\mathtt{next}$ is in the form $\langle \varnothing, \sigma \rangle$, then $\sigma$ is a unifier of the original problem and is added to the set \texttt{U} of solutions. Otherwise, the next step is to construct all problems $P'$ such that $\mathtt{next} \Rightarrow P'$. If $P'$ is $\mtt{fail}$, then the derivation tree below $\mathtt{next}$ is ignored, otherwise $P'$ gets normalized and enqueued.

\begin{figure}[tb]
\begin{center}
$\begin{array}{ll}
\textsf{Unify}(\mtt{P}):~~ &  \mtt{U} := \emptyset\\
& \text{create empty queue $\mtt{Q}$ of unification problems} \\
& \text{normalize $\mtt{P}$}\\
& \mtt{Q.enqueue(\langle P,\varnothing\rangle)} \\
& \ttt{while~} \mtt{Q} \text{ is not empty } \\
& \;\;\;\mathtt{next := Q.dequeue()}\\
& \;\;\;\text{\ttt{if~}}\mathtt{next} \text{~is in the form~} \langle\varnothing,\sigma \rangle \\ 
& \;\;\;\;\;\;\mtt{U := U \cup \{ \sigma \}}\\
& \;\;\;\text{\ttt{else} \ttt{if} $\mtt{next \nRightarrow fail}$}\\
& \;\;\;\;\;\;\text{\ttt{foreach} $\mtt{P'}$ such that $\mtt{next} \Rightarrow \mtt{P'}$}\\
& \;\;\;\;\;\;\;\;\;\;\;\; \text{normalize $\mtt{P'}$} \\
& \;\;\;\;\;\;\;\;\;\;\;\; \mtt{Q.enqueue(P')} \\
& \;\;\;\;\;\;\text{\ttt{end} \ttt{foreach}} \\
& \;\;\;\;\;\;\text{\ttt{end} \ttt{if}} \\
& \;\;\;\text{\ttt{end} \ttt{if}} \\
& \text{\ttt{end while}} \\
& \text{\ttt{return} $\mtt{U}$} \\
& \\
\end{array}$
\caption{Unification algorithm}
\label{alg}
\end{center}
\end{figure}

\newpage
An example tree traversed by the algorithm is shown in Figure \ref{Unifexample}. Nodes are labelled with unification problems and edges represent applications of transformation rules. The root of the tree is the problem $\langle \mtt{y:2=a:x} \rangle$ to which the rules \textsf{Decomp1}, \textsf{Subst2} and \textsf{Subst3} can be applied. The three resulting problems form the second level of the search tree and are processed in turn. Eventually, the unifiers
\[ 
\begin{array}{rcl}
\sigma_1 & = & \{\mtt{x} \mapsto 2,\, \mtt{y} \mapsto \mtt{a}\} \\
\sigma_2 & = & \{\mtt{x} \mapsto \mtt{y':2},\, \mtt{y} \mapsto \mtt{a:y'}\} \\
\sigma_3 & = & \{\mtt{a} \mapsto 2,\, \mtt{x} \mapsto \mtt{empty},\, \mtt{y} \mapsto \mtt{empty}\}
\end{array}
\]
are found, which represent a complete set of unifiers of the initial problem. Note that the set is not minimal because $\sigma_1$ is an instance of $\sigma_2$. 

\begin{figure}[!p]
\begin{tikzpicture}[every node/.style = {align=center}, every edge/.style = {->}, level 1/.style={sibling distance = 4.5cm, level distance = 2.6cm, edge from parent/.style={draw,-latex}
},level 2/.style={sibling distance = 3.5cm, level distance = 3.2cm, edge from parent/.style={draw,-latex}
},level 3/.style={sibling distance = 3.5cm, level distance = 3.4cm, edge from parent/.style={draw,-latex}
},level 4/.style={sibling distance = 3.5cm, level distance = 3.4cm, edge from parent/.style={draw,-latex}
}]

\node (o) {$\mylabelC{\mathsf{y}:2  = \mathsf{a} :\mathsf{x} }{\varnothing}$}
	child {
	      node {$\mylabelC{2  = \mathsf{x}}{ \mathsf{y} \mapsto \mathsf{a}}$}
		      child {
		         node {$\mylabelC{\mathsf{x}  = 2}{ \mathsf{y} \mapsto \mathsf{a}}$}
		              child {
		              	    node {$\mylabelC{\varnothing}{ \mathsf{y} \mapsto \mathsf{a} \\ \mathsf{x} \mapsto 2 }$\\ \blue{solved}}
		              	    edge from parent node[left]{\textsf{Subst1}}
		              }
		         edge from parent node[left]{\textsf{Orient1}}
		      }
		  edge from parent node[left]{\textsf{Decomp1}\hspace*{1em}}
	      }
	child {
		node[xshift=-1cm] {$\mylabelC{2 = \mathsf{a} :\mathsf{x}}{ \mathsf{y} \mapsto \mtt{empty}}$}
			child {
				node {$\mylabelC{\mathsf{a} :\mathsf{x} = 2}{ \mathsf{y} \mapsto \mtt{empty}}$}
					child {
						node {$\mylabelC{\mathsf{x} = \mtt{empty}}{ \mathsf{y} \mapsto \mtt{empty} \\ \mathsf{a} \mapsto 2}$}
							child {
								node {$\mylabelC{\varnothing}{ \mathsf{y} \mapsto \mtt{empty} \\ \mathsf{a} \mapsto 2 \\ \mathsf{x} \mapsto \mtt{empty}}$ \\ \blue{solved}}
								edge from parent node[left]{$\mathsf{Subst1}$}
							}
						edge from parent node[left]{$\mathsf{Decomp1'}$}
					}	              
				edge from parent node[left]{$\mathsf{Orient1}$}
			} 
		edge from parent node[right]{\hspace*{.3em}$\mathsf{Subst2}$}
	}       
	child {
		node[xshift=1cm] {$\mylabelC{\mathsf{y'}:2 = \mathsf{x} }{\mathsf{y} \mapsto \mathsf{a}:\mathsf{y'}}$} 
			child {
				node {$\mylabelC{\mathsf{x} = \mathsf{y'}:2 }{\mathsf{y} \mapsto \mathsf{a}:\mathsf{y'}}$} 
					child {
						node {$\mylabelC{\varnothing}{\mathsf{y} \mapsto \mathsf{a}:\mathsf{y'} \\ \mathsf{x} \mapsto \mathsf{y'}:2}$ \\ \blue{solved}} 
						edge from parent node[left]{$\mathsf{Subst1}$}
					}
				edge from parent node[left]{$\mathsf{Orient2}$}
			} 
			child {
				node {$\mylabelC{\mathsf{y''}:2 = \mtt{empty} }{\mathsf{y} \mapsto \mathsf{a}:\mathsf{x:y''}}$}
					child {
						node {$\mylabelC{2 = \mtt{empty} }{\mathsf{y} \mapsto \mathsf{a}:\mathsf{x}}$}
							child {
								node[yshift=1cm] {\red{fail}}
								edge from parent node[right]{$\mathsf{Clash2}$}						
							}
						edge from parent node[right]{$\mathsf{Subst2}$}
					}                           
				edge from parent node[right]{$\mathsf{Decomp2}$}
			} 
			child {
				node {$\mylabelC{2 = \mathsf{x'} }{\mathsf{y} \mapsto \mathsf{a}:\mathsf{y'} \\ \mathsf{x} \mapsto \mathsf{y':x'} }$}
					child {
						node {$\mylabelC{\mathsf{x'} = 2}{\mathsf{y} \mapsto \mathsf{a}:\mathsf{y'} \\ \mathsf{x} \mapsto \mathsf{y':x'} }$}                           
							child {
								node {$\mylabelC{\varnothing}{\mathsf{y} \mapsto \mathsf{a}:\mathsf{y'} \\ \mathsf{x} \mapsto \mathsf{y':2} }$ \\ \blue{solved}}                           
								edge from parent node[right]{$\mathsf{Subst1}$}
							}
						edge from parent node[right]{$\mathsf{Orient1}$}
					}                           
				edge from parent node[right]{$\mathsf{Decomp2'}$}
			}
		edge from parent node[right]{$\mathsf{Subst3}$}
	}	 
;
\end{tikzpicture}
\caption{Unification example}
\label{Unifexample}
\end{figure}
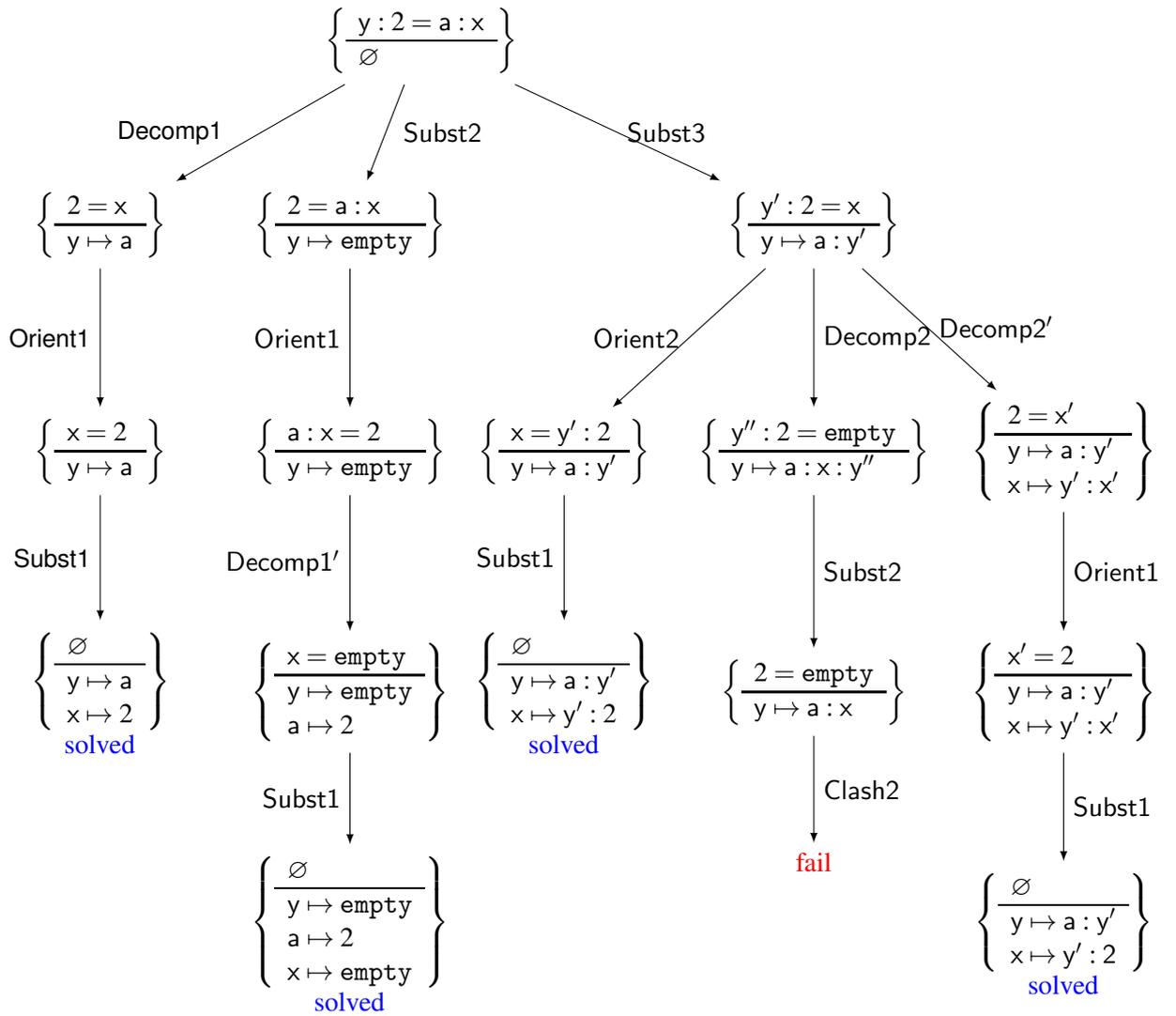

The algorithm is similar to the A-unification (word unification) algorithm presented in \cite{Schulz92a} which looks at the head of the problem equation. That algorithm terminates for the special case that the input problem has no repeated variables, and is sound and complete. Our approach can be seen as an extension from A-unification to AU-unification, to handle the unit equations, and presented in the rule-based style of \cite{Baader-Snyder01a}. In addition, our algorithm deals with GP's subtype system.

\subsection{Termination and Soundness}
\label{subsec:unif-properties}

\setstretch{1}
We show that the unification algorithm terminates if the input problem contains no repeated list variables, where termination of the algorithm follows from termination of the relation $\Rightarrow$.

\begin{theorem}[Termination] If $P$ is a unification problem without repeated list variables, then there is no infinite sequence $P \Rightarrow P_1 \Rightarrow P_2 \Rightarrow \dots$
\end{theorem}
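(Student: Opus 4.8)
The plan is to exhibit a well-founded measure on unification problems that strictly decreases with every application of a transformation rule; an infinite sequence $P \Rightarrow P_1 \Rightarrow P_2 \Rightarrow \cdots$ would then induce an infinite strictly descending chain in a well-founded order, which is impossible. Concretely, I would map each problem $P = \langle s = t,\sigma\rangle$ to a pair $\mu(P) = (N(P),W(P)) \in \N \times \{0,\dots,15\}$ ordered lexicographically, where $N(P)$ is the total number of list-component occurrences in the normalized sides $s$ and $t$ (for instance $N(\langle \mtt{x}{:}2 = \mtt{a}\rangle) = 2+1 = 3$), and $W(P)$ is an orientation penalty defined below. The rules split into two groups treated separately: the reducing rules (\textsf{Remove}, \textsf{Decomp1}--\textsf{Decomp4}, \textsf{Subst1}--\textsf{Subst3}) strictly decrease $N$, whereas the orientation rules (\textsf{Orient1}--\textsf{Orient4}) leave $N$ unchanged and strictly decrease $W$. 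Since every rule strictly decreases $\mu$ and none increases $N$, termination follows from well-foundedness of the lexicographic order.

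For the reducing rules I would first establish, by induction over the derivation, an invariant — and this is exactly where the left-linearity / \emph{no repeated list variable} hypothesis enters: every reachable equation has each side simple (at most one list variable per side) and the two sides share no variables. Hence whenever a rule peels a head variable $\mtt{x}$ and records a binding for $\mtt{x}$ in $\sigma$, that variable does not occur in the remaining equation, so the binding need not be applied to it and the new equation is obtained purely by deleting and/or renaming heads. I would then check the rules one by one: \textsf{Decomp1}, \textsf{Decomp1'} and \textsf{Decomp3} delete one leading component on each side ($N$ drops by $2$); \textsf{Decomp2}, \textsf{Decomp2'}, \textsf{Subst2}, \textsf{Subst3} delete a leading component on one side while leaving the other side's count unchanged (the fresh list variable in \textsf{Decomp2}/\textsf{Decomp2'}/\textsf{Subst3} merely renames the old head), so $N$ drops by $1$; \textsf{Decomp4} removes two components; and \textsf{Remove}, \textsf{Subst1} reach the solved form $\langle\varnothing,\sigma\rangle$ with $N=0$. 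I would also note that normalization only erases occurrences of $\mtt{empty}$ and so never increases $N$, and that substitution into the equation happens only in \textsf{Decomp1'}, where an atom replaces an atom variable one-for-one and thus cannot increase $N$ either.

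The delicate part, and the main obstacle, is the orientation penalty, since each of \textsf{Orient1}--\textsf{Orient4} merely swaps the sides (or moves $\mtt{empty}$) and therefore keeps $N$ fixed; I must design $W$ so that each of them strictly decreases it. I would set $W(s=t) = 8e + 4\tau + 2c + u$, where the four indicators record, in decreasing priority, the ``misorientations'' that the rules repair: $e=1$ iff $s=\mtt{empty}$ while $t\neq\mtt{empty}$ (repaired by \textsf{Orient4}); $\tau=1$ iff both heads are variables with $\mrm{type}$ of the left head strictly below $\mrm{type}$ of the right head (repaired by \textsf{Orient3}); $c=1$ iff the left head is a constant and the right head a variable (repaired by \textsf{Orient1}); and $u=1$ iff $s$ is compound with a variable head while $t$ is a single variable of the same type (repaired by \textsf{Orient2}). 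The weights $8,4,2,1$ guarantee that flipping a higher-priority indicator from $1$ to $0$ outweighs any increase of the lower ones, so it suffices to verify, for each \textsf{Orient}$i$, that it flips its indicator from $1$ to $0$ without raising a higher-priority one. A short case analysis does this — the premise of each rule fixes which heads are variables or constants and of which type — for example \textsf{Orient3} has $e=0$ before and after and flips $\tau$ from $1$ to $0$, while \textsf{Orient1} keeps $e=\tau=0$ fixed and flips $c$.

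Having checked each reducing rule and each of the four orientation rules, I would conclude that $\mu$ strictly decreases along every step of $\Rightarrow$, so no infinite sequence $P \Rightarrow P_1 \Rightarrow P_2 \Rightarrow \cdots$ can exist. I expect the bookkeeping of the orientation indicators $(e,\tau,c,u)$, and confirming that the reached configurations are genuinely orient-stable (so that the orientation phase really bottoms out), to be the only error-prone steps; the size argument for the reducing rules is routine once the invariant is in place.
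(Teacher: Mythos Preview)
Your proposal is correct and follows essentially the same strategy as the paper: a lexicographic well-founded order whose primary component is a size measure on the equation (which every non-\textsf{Orient} rule strictly decreases) and whose secondary components break the ties created by the \textsf{Orient} rules. The paper uses a four-tuple $(n_1,n_2,n_3,n_4)$ with $n_1=|L|+|M|$ and $n_4=|L|$, whereas you package the orientation part into a single weighted sum $W=8e+4\tau+2c+u$; this is a cosmetic difference, and your per-rule checks for the \textsf{Orient} rules go through just as the paper's do.
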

\begin{proof}
Define the \emph{size}\/ $|L|$ of an expression $L$\/ by
\begin{itemize}
\item $0$ if $L = \mtt{empty}$,
\item $1$ if $L$\/ is an expression of category Atom (see Figure \ref{fig:abstract_label_syntax}) or a list variable,
\item $|M|+|N|+1$ if $L = M:N$.
\end{itemize}
We define a lexicographic termination order by assigning to a unification problem $P = \langle L=M, \sigma \rangle$\/ the tuple $(n_1, n_2, n_3, n_4)$, where
\begin{itemize}
\item $n_1$ is the size of $P$, that is, $n_1 = |L| + |M|$;
\item $
n_2=
\begin{cases}
0 &\text{if~}L\text{ starts with a variable}\\
1 &\text{otherwise}
\end{cases}
$
\item $
n_3=
\begin{cases}
1 &\text{if~}\mrm{type}(\mtt{x}) > \mrm{type}(\mtt{y}) \\
0 &\text{otherwise}
\end{cases}
 $
 
where $\mtt{x}$ and $\mtt{y}$ are the starting symbols of $L$ and $M$
\item $n_4 = |L|$
\end{itemize}
The table in Figure \ref{ruleordering} shows that for each transformation step $P \dder P'$, the tuple associated with $P'$ is strictly smaller than the tuple associated with $P$\/ in the lexicographic order induced by the components $n_1$ to $n_4$.
\begin{figure}[!h]
\begin{center}
$\begin{array}{lcccccc}
 & n_1 & n_2 & n_3 & n_4 \\ \hline
\mathsf{Subst1-3} & >  \\
\mathsf{Decomp1-4} & >  \\
\mathsf{Remove} & >  \\
\mathsf{Orient1} & = & > \\
\mathsf{Orient4} & = & > \\
\mathsf{Orient3} & = & = & >\\
\mathsf{Orient2} & = & = & = & > \\
\end{array}$
\caption{Lexicographic termination order}
\label{ruleordering}
\end{center}
\end{figure}

For most rules, the table entries are easy to check. All the rules except for \textsf{Orient} decrease the size of the input equation $s \meqq t$. The \textsf{Orient} rules keep the size equal, but move variables and expressions of a bigger type to the left-hand side.
\end{proof}

%
%
%
\newpage
\begin{lemma}
\label{1.4}
If $P \Rightarrow P'$, then $\mathcal{U}(P) \supseteq \mathcal{U}(P')$\enspace.
\end{lemma}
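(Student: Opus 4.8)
The plan is to prove the equivalent inclusion $\mathcal{U}(P') \subseteq \mathcal{U}(P)$ by a case analysis over which transformation rule of Figure~\ref{rules} justifies the step $P \Rightarrow P'$. I treat substitutions as total maps (the identity off their domain), so that $\mathcal{U}(P)$ and $\mathcal{U}(P')$ are sets of objects of the same kind and the inclusion is literal, even when $P'$ introduces a fresh variable. I fix an arbitrary $\theta \in \mathcal{U}(P')$ and verify the two defining conditions for $\theta$ to be a unifier of $P$: that $\theta$ solves the equation of $P$, and that $\theta$ satisfies every binding of the substitution component of $P$. The only algebraic facts required are that $=_{\mrm{AU}}$ is a congruence with respect to concatenation (so $u =_{\mrm{AU}} u'$ and $v =_{\mrm{AU}} v'$ give $u\mathop{\mtt{:}}v =_{\mrm{AU}} u'\mathop{\mtt{:}}v'$), the associativity and unit laws themselves, and the derived substitution-congruence principle: if $z\rho_1 =_{\mrm{AU}} z\rho_2$ for every variable $z$, then $L\rho_1 =_{\mrm{AU}} L\rho_2$ for every expression $L$.

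The technical heart is a uniform observation about the substitution component. Reading off the definition of unifier, ``$\theta$ satisfies $\sigma_P$'' is equivalent to $\sigma_P \circ \theta =_{\mrm{AU}} \theta$ (pointwise, up to $=_{\mrm{AU}}$). Every substantive rule replaces $\sigma$ by $\sigma \circ \{\mtt{x} \mapsto s\}$, where $\mtt{x}$ is the head variable of the current equation and $s$ is the relevant term, and the algorithm maintains the invariant that solved variables are removed from the equation (and fresh variables never enter $\dom(\sigma)$), so that $\mtt{x} \notin \dom(\sigma)$. Using associativity of substitution composition together with the substitution-congruence principle, I will show that \emph{under this invariant} $\theta$ satisfies $\sigma \circ \{\mtt{x} \mapsto s\}$ if and only if $\theta$ satisfies $\sigma$ and $\mtt{x}\theta =_{\mrm{AU}} s\theta$. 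This lets me extract from any unifier of $P'$ both the facts about the old substitution $\sigma$ and the new equality $\mtt{x}\theta =_{\mrm{AU}} s\theta$ induced by the guessed binding.

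With this in hand, every rule reduces to a short AU-calculation. The orientation rules \textsf{Orient1}–\textsf{Orient4} leave $\sigma$ unchanged and merely swap the two sides of the equation (up to normalization of $\mtt{empty}$), so the inclusion is immediate from symmetry of $=_{\mrm{AU}}$; \textsf{Remove} and \textsf{Decomp3} are immediate from reflexivity and congruence. For the guessing and decomposition rules the calculation always has the same shape: for instance in \textsf{Subst3}, where $P = \langle \mtt{x}\mathop{\mtt{:}}L = \mtt{a}\mathop{\mtt{:}}M, \sigma\rangle$ and $P' = \langle \mtt{x'}\mathop{\mtt{:}}L = M, \sigma \circ \{\mtt{x} \mapsto \mtt{a}\mathop{\mtt{:}}\mtt{x'}\}\rangle$, a unifier $\theta$ of $P'$ yields $\mtt{x'}\theta\mathop{\mtt{:}}L\theta =_{\mrm{AU}} M\theta$ and $\mtt{x}\theta =_{\mrm{AU}} \mtt{a}\theta\mathop{\mtt{:}}\mtt{x'}\theta$, whence $\mtt{x}\theta\mathop{\mtt{:}}L\theta =_{\mrm{AU}} \mtt{a}\theta\mathop{\mtt{:}}(\mtt{x'}\theta\mathop{\mtt{:}}L\theta) =_{\mrm{AU}} \mtt{a}\theta\mathop{\mtt{:}}M\theta$ by associativity and congruence, i.e.\ $\theta$ solves the equation of $P$. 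The rules \textsf{Decomp1}, \textsf{Decomp2}/\textsf{Decomp2'} and \textsf{Subst2} follow this pattern verbatim; \textsf{Decomp1'} and \textsf{Decomp4} additionally apply the guessed substitution to the equation, so I first use the substitution-congruence principle to replace $(L\{\mtt{x}\mapsto \mtt{a}\})\theta$ by $L\theta$ (legitimate precisely because $\mtt{x}\theta =_{\mrm{AU}} \mtt{a}\theta$) and then run the same computation, the unit law turning $\mtt{a}\theta\mathop{\mtt{:}}\mtt{empty}$ into $\mtt{a}\theta$ in \textsf{Decomp4}. It is worth noting that the type, freshness and $L \neq \mtt{empty}$ side conditions play \emph{no} role in this direction; they are needed for termination and completeness, not for soundness.

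The step I expect to be the main obstacle is not any individual rule but getting the bookkeeping of the substitution component exactly right, namely proving the equivalence ``$\theta$ satisfies $\sigma \circ \{\mtt{x} \mapsto s\}$ iff $\theta$ satisfies $\sigma$ and $\mtt{x}\theta =_{\mrm{AU}} s\theta$'' cleanly. This hinges on the invariant $\mtt{x} \notin \dom(\sigma)$ (without which $\mtt{x}(\sigma \circ \{\mtt{x}\mapsto s\})$ need not equal $s$) and on the interaction between composition and $=_{\mrm{AU}}$. Some care is also needed for the fresh variables $\mtt{x'}$ of \textsf{Decomp2}/\textsf{Decomp2'} and \textsf{Subst3}: since they do not occur in $P$, the value $\mtt{x'}\theta$ is unconstrained from the viewpoint of $P$, so one must confirm that reusing $\theta$ as a unifier of $P$ creates no clash. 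Once these two points are settled, the remainder is a routine, mechanical pass through the rule table.
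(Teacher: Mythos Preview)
Your proposal is correct and follows essentially the same route as the paper: a case analysis over the transformation rules, taking an arbitrary unifier $\theta$ of the conclusion $P'$ and verifying by a short $=_{\mrm{AU}}$ calculation that it also unifies the premise $P$. The paper's proof is terser---for each rule it simply writes ``We have $\mtt{x}\theta = \mtt{s}\theta$ and $L\theta = M\theta$'' and concatenates---whereas you factor out the bookkeeping on the substitution component into the explicit equivalence ``$\theta$ satisfies $\sigma \circ \{\mtt{x}\mapsto s\}$ iff $\theta$ satisfies $\sigma$ and $\mtt{x}\theta =_{\mrm{AU}} s\theta$'' (under $\mtt{x}\notin\dom(\sigma)$); this is exactly the fact the paper uses implicitly when it extracts $\mtt{x}\theta = \mtt{s}\theta$ from $\theta \in \mathcal{U}(P')$, so your extra layer is a clarification rather than a different argument.
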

\begin{proof}

We show that for each transformation rule, a unifier $\theta$ of the conclusion unifies the premise.

Note that for \textsf{Remove, Decomp3} and \textsf{Orient1-4}, we have that $\sigma_{P'} = \sigma_{P}$\enspace.

\begin{itemize}
\item \textsf{Remove}

$\begin{array}{lcl}
\theta \in \mathcal{U}(\langle \varnothing, \sigma_{P'} \rangle) & \iff & \theta \in \mathcal{U}(\langle \varnothing, \sigma_{P} \rangle)\\
& \iff & \theta \in \mathcal{U}(\langle \varnothing, \sigma_{P} \rangle) \wedge L\theta = L\theta \\
& \iff & \theta \in \mathcal{U}(\langle L=L, \sigma_{P} \rangle) \\
\end{array}$

\item \textsf{Decomp3}

$\begin{array}{lcl}
\theta \in \mathcal{U}(\langle L=M, \sigma_{P'} \rangle) & \iff & \theta \in \mathcal{U}(\langle L=M, \sigma_{P} \rangle)\\
& \iff & \theta \in \mathcal{U}(\langle L=M, \sigma_{P} \rangle) \wedge \mtt{s\theta = s\theta} \\
& \iff & \theta \in \mathcal{U}(\langle \mtt{s}:L=\mtt{s}:M, \sigma_{P} \rangle) \\
\end{array}$

\item \textsf{Decomp1} - We have $\mtt{x\theta = s\theta}$ and $L\theta=M\theta$. 

Then $(\mtt{x}:L)\theta = (\mtt{s}:L)\theta = (\mtt{s}:M)\theta$ as required.

\item \textsf{Decomp1'} - We have $\mtt{x\theta = a\theta}$ and $L\theta=M\theta$.

Then $(\mtt{x}:L)\theta = (\mtt{a}:L)\theta = (\mtt{a}:M)\theta$ as required.

\item \textsf{Decomp2} - We have $\mtt{x\theta =  (y:x')\theta}$ and $\mtt{(x'}:L)\theta= M\theta$.

Then $(\mtt{x}:L)\theta = (\mtt{y:x'}:L)\theta = (\mtt{y}:M)\theta$ as required.
\item \textsf{Decomp2'} - We have $\mtt{y\theta =  (x:y')\theta}$ and $L\theta= (\mtt{y'}:M)\theta$. 

Then $(\mtt{y}:M)\theta = (\mtt{x:y'}:M)\theta = (\mtt{x}:L)\theta$ as required.

\item \textsf{Decomp4} - We have $\mtt{x\theta =  a\theta}$ and $\mtt{y} \theta =  (\mtt{empty})\theta = \mtt{empty}$.

Then $(\mtt{a:y})\theta = (\mtt{x:y})\theta = (\mtt{x:empty})\theta = \mtt{x}\theta$ as required.

\item \textsf{Subst1} - We have $\mtt{x\theta} = L \theta$

$\begin{array}{lcl}
\theta \in \mathcal{U}(\langle \varnothing, \sigma_{S'} \rangle) & \iff & \theta \in \mathcal{U}(\langle \varnothing, \sigma_{S} \circ (\mtt{x} \mapsto L) \rangle)\\
& \iff & \theta \in \mathcal{U}(\langle \varnothing, \sigma_{S} \circ (\mtt{x} \mapsto L) \rangle) \wedge \mtt{x}\theta = L\theta\\
& \iff & \theta \in \mathcal{U}(\langle \mtt{x}=L, \sigma_{S} \rangle)\\
\end{array}$

\item \textsf{Subst2} - We have $ \mtt{x}\theta =  (\mtt{empty})\theta$ and $L\theta=M\theta$.

Then $(\mtt{x}:L)\theta = (\mtt{empty}:L)\theta = L\theta=M\theta$ as required.

\item \textsf{Subst3} - We have $\mtt{ x \theta=  (a:x')\theta}$ and $\mtt{(x'}:L)\theta=M\theta$.

Then $(\mtt{x}:L)\theta = (\mtt{a:x'}:L)\theta = (\mtt{a}:M)\theta$ as required.

\item \textsf{Orient1-4} - Since $=_{\mrm{AU}}$ is an equivalence relation and hence symmetric, a unifier of the conclusion is also a unifier of the premise.

\end{itemize}

\end{proof}

\begin{theorem}[Soundness]
\label{thm:soundness}
If $P \Rightarrow^{+} P'$ with $P' = \langle \varnothing, \sigma_{P'} \rangle$ in solved form, then $\sigma_{P'}$ is a unifier of $P$.
\end{theorem}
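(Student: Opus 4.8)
The plan is to combine the single-step inclusion of Lemma~\ref{1.4} with the observation that the substitution recorded in a solved problem already unifies that problem. Concretely, I would prove two facts and chain them: first, that $\mathcal{U}(P) \supseteq \mathcal{U}(P')$ whenever $P \Rightarrow^{+} P'$, and second, that $\sigma_{P'} \in \mathcal{U}(P')$ for a solved problem $P' = \langle \varnothing, \sigma_{P'} \rangle$. Together these yield $\sigma_{P'} \in \mathcal{U}(P)$, which is exactly the claim.

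For the first fact I would argue by induction on the length $n \geq 1$ of the derivation $P \Rightarrow^{+} P'$. The base case $n = 1$ is immediate from Lemma~\ref{1.4}. For the inductive step, write $P \Rightarrow P_1 \Rightarrow^{+} P'$; the induction hypothesis gives $\mathcal{U}(P_1) \supseteq \mathcal{U}(P')$, and one more application of Lemma~\ref{1.4} to $P \Rightarrow P_1$ gives $\mathcal{U}(P) \supseteq \mathcal{U}(P_1)$, so that $\mathcal{U}(P) \supseteq \mathcal{U}(P')$ by transitivity of $\supseteq$.

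The second fact is where the real content lies. By the definition of unifier, $\sigma_{P'}$ unifies $P' = \langle \varnothing, \sigma_{P'} \rangle$ precisely when $\mtt{x_i}\,\sigma_{P'} =_{\mrm{AU}} \mtt{t_i}\,\sigma_{P'}$ for every binding $\{\mtt{x_i \mapsto t_i}\}$ of $\sigma_{P'}$ (the empty equation part imposing no further condition). Since $\mtt{x_i}\,\sigma_{P'} = \mtt{t_i}$ by definition of the binding, it suffices to show $\mtt{t_i}\,\sigma_{P'} = \mtt{t_i}$, i.e.\ that no variable occurring in $\mtt{t_i}$ lies in $\dom(\sigma_{P'})$. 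In other words, it suffices to prove that $\sigma_{P'}$ is \emph{idempotent}, $\dom(\sigma_{P'}) \cap \vran(\sigma_{P'}) = \varnothing$.

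I expect idempotency to be the main obstacle, and I would establish it as an invariant maintained along the derivation: every reachable problem $\langle L = M, \sigma \rangle$ satisfies that $\sigma$ is idempotent and that the variables in $\dom(\sigma)$ do not reoccur in $L = M$. The base (initial problem, $\sigma = \varnothing$) is trivial, and the inductive step inspects each transformation rule that extends the substitution. The rules \textsf{Decomp1/1'}, \textsf{Subst1} and \textsf{Decomp4} eliminate the bound variable from the remaining equation, with \textsf{Subst1} additionally carrying the occur-check side condition $\mtt{x} \notin \mrm{Var}(L)$; the rules \textsf{Decomp2/2'} and \textsf{Subst3} bind a variable to an expression containing a \emph{fresh} list variable, which by construction lies outside the current domain and range. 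These side conditions are exactly what guarantees that composing $\sigma$ with the new binding keeps domain and range disjoint, so idempotency propagates. A minor point to check is that the interleaved normalization steps only delete occurrences of $\mtt{empty}$ and hence neither enlarge the domain nor reintroduce domain variables into the range, so they cannot destroy the invariant. Once idempotency is in place the second fact is immediate, and the theorem follows.
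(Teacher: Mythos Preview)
Your overall structure is exactly the paper's: induct on the derivation length using Lemma~\ref{1.4}, then note that $\sigma_{P'}\in\mathcal{U}(P')$ for a solved $P'$. The paper dispatches the second point in a single phrase (``$\sigma_{P'}$ is a unifier of $P'$ by definition''), whereas you correctly observe that this is not literally immediate from the definition of unifier: one needs $\mtt{t_i}\sigma_{P'}=_{\mrm{AU}}\mtt{t_i}$ for each binding, which is precisely idempotency of $\sigma_{P'}$. So you are filling in a detail the paper leaves implicit, and your proposed invariant (idempotent $\sigma$, with $\dom(\sigma)$ disjoint from the variables of the current equation) is the natural way to do it.

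Two small points to make the invariant argument airtight. First, for \textsf{Decomp1}, \textsf{Subst2} and \textsf{Subst3} the bound variable $\mtt{x}$ is a \emph{list} variable and the new equation is the residual $L=M$ \emph{without} applying $\{\mtt{x}\mapsto\cdot\}$; the reason $\mtt{x}$ nonetheless disappears is that simple expressions contain at most one list-variable occurrence and the two sides share no variables (the standing assumptions of the paper). You should state this explicitly rather than leaving it to ``eliminate the bound variable''. Second, when you extend $\sigma$ to $\sigma\circ\{\mtt{x}\mapsto t\}$, you also need that the variables of $t$ are not already in $\dom(\sigma)$; this follows from your second invariant because $t$ is built from material in the current equation (and, for \textsf{Decomp2/2'}/\textsf{Subst3}, a fresh variable). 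With these remarks your idempotency proof goes through, and the rest is identical to the paper.
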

\begin{proof}
We have that $\sigma_{P'}$ is a unifier of $P'$ by definition. A simple induction with \autoref{1.4} shows that $\sigma_{P'}$ must be a unifier of $P$. 
\end{proof}

\subsection{Completeness}
In order to prove that our algorithm is complete, by \autoref{def:completeset} we have to show that for any unifier $\delta$, there is a unifier in our solution set that is more general.

Our proof involves using a \emph{selector} algorithm that takes a unification problem $\langle s \meqq t \rangle$ together with an arbitrary unifier $\delta$, and outputs a \emph{path} of the unification tree associated with a more general unifier than $\delta$. This is very similar to \cite{Siekmann78a} where completeness of a A-unification algorithm is shown via such selector.


\begin{lemma}[Selector Lemma]
\label{thm:prediction}
There exists an algorithm $\textsf{Select}(\delta, s \meqq t )$ that takes an equation $ s \meqq t $ and a unifier $\delta$ as input and produces a sequence of \emph{selections}  $B~=~( b_1, \ldots, b_k )$ such that:

\begin{itemize}
\item \textsf{Unify}($s \meqq t$) has a path specified by $B$\enspace.
\item For all selections $b \in B$: if $\sigma$ is the substitution corresponding to $b$, then there exists an instantiation $\lambda$ such that $\sigma \circ \lambda \leq \delta$\enspace.
\end{itemize}
\end{lemma}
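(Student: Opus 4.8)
The plan is to build \textsf{Select} by recursion along the transformation relation $\dder$, using the given unifier $\delta$ as an oracle that resolves, at each branching node of the tree explored by \textsf{Unify}, which child to descend into. Throughout the recursion I maintain the invariant that the current problem $\langle s_i = t_i,\, \sigma_i \rangle$ carries a \emph{residual} unifier $\delta_i$: a substitution with $s_i \delta_i =_{\mrm{AU}} t_i \delta_i$ and $\sigma_i \circ \delta_i =_X \delta$ on $X = \mrm{Var}(s \meqq t)$, starting from $\sigma_0 = \varnothing$ and $\delta_0 = \delta$. The selection $b_i$ recorded at step $i$ is the transformation rule together with its instantiation, and the substitution ``corresponding to'' $b_i$ is the accumulated $\sigma_{i+1}$; the invariant then yields the second bullet directly, with $\lambda = \delta_{i+1}$, since $\sigma_{i+1} \circ \delta_{i+1} =_X \delta$ gives $\sigma_{i+1} \circ \lambda \leqq_X \delta$.

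The core of the argument is the single recursion step. Given $\langle s_i = t_i,\, \sigma_i \rangle$ with residual unifier $\delta_i$, I inspect the normalised head symbols of $s_i$ and $t_i$ and the values they take under $\delta_i$, and from this read off which rule to apply. For example, when $s_i = \mtt{x}:L$ with $\mathrm{type}(\mtt{x}) = \ttt{list}$ and the opposite head is an atom expression $\mtt{a}$, the decision is determined by $\mtt{x}\delta_i$: if $\mtt{x}\delta_i =_{\mrm{AU}} \ttt{empty}$ I select \textsf{Subst2}, and otherwise, when the leading atom of $\mtt{x}\delta_i$ matches $\mtt{a}\delta_i$, I select \textsf{Subst3} and set $\delta_{i+1}$ to send the fresh variable $\mtt{x'}$ to the tail of $\mtt{x}\delta_i$ (agreeing with $\delta_i$ elsewhere). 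In each case I exhibit $\delta_{i+1}$ explicitly by peeling off the prefix that $\delta_i$ assigns to the head of $s_i$, and verify both that $\sigma_{i+1} \circ \delta_{i+1} =_X \delta$ and that $\delta_{i+1}$ respects the GP type discipline when splitting $\mtt{x}\delta_i$ across the subtypes $\ttt{int} \leq \ttt{atom} \leq \ttt{list}$. The simplicity and left-linearity assumptions are exactly what make the step well-defined: each side carries at most one list variable and the two sides share no list variable, so the alignment forced by $\delta_i$ is unambiguous and only finitely many children need to be considered.

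Termination of \textsf{Select} follows from the Termination theorem: the sequence of selections is a genuine $\dder$-derivation, hence finite, so the chosen path reaches a leaf of the tree built by \textsf{Unify} and the first bullet holds. To see that this leaf is \emph{solved} rather than $\mrm{fail}$, I use that every intermediate problem is unifiable, witnessed by $\delta_i$, together with \autoref{1.6}, \autoref{1.7} and \autoref{1.8} and the type-clash rule \textsf{Clash4} (whose premise has incomparable types and so is plainly non-unifiable): each shows that the premise of a failure rule has no unifier. Hence a path preserving unifiability never matches a failure rule, and since the failure rules take precedence, the selector can always descend into a non-failing child. At the solved leaf $\langle \varnothing, \sigma_k \rangle$ the residual equation is trivial, so $\delta_k$ witnesses $\sigma_k \leqq_X \delta$; combined with \autoref{def:completeset} this is precisely what completeness of \textsf{Unify} requires.

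The main obstacle is the case analysis in the recursion step, in particular the list-variable heads, where I must decide purely from $\mtt{x}\delta_i$ between \textsf{Decomp1}, \textsf{Decomp2}/\textsf{2'}, \textsf{Subst2} and \textsf{Subst3}, construct the residual unifier for the chosen branch, and prove both that it is a well-typed GP substitution and that it genuinely factors $\delta$. The delicate bookkeeping lies in handling the unit law cleanly, especially the boundary cases where a head variable is sent to $\ttt{empty}$, which force \textsf{Subst2} and interact with the normalisation rules $L:\ttt{empty} \to L$ and $\ttt{empty}:L \to L$; this is also the point at which left-linearity is indispensable, since it is what rules out the infinitary behaviour exhibited after the Left-linearity assumption and thereby keeps the branching, and the residual unifier, finitely determined.
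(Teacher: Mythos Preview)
Your proposal is correct and rests on the same idea as the paper's proof: use the given unifier $\delta$ as an oracle to pick, at every branching point of the \textsf{Unify} tree, the child that stays consistent with $\delta$, and carry an invariant that the accumulated substitution is an instance of $\delta$. The difference is purely in how the oracle is realised. The paper makes it concrete by \emph{preprocessing}: each variable $x$ is expanded into pseudo-symbols $x_1:\ldots:x_n$ with $n=|x\delta|$ (and $x_e$ when $x\delta=\mtt{empty}$), the two expanded words are aligned on two tapes, and a Turing-machine-style selector advances head positions $m,n$ while comparing look-ahead counts $L(\overline{s},m)$ and $L(\overline{t},n)$. Your ``residual unifier'' $\delta_i$ is the algebraic abstraction of exactly this data: the remaining length at position $m$ is the length of what $\delta_i$ still assigns to the current head variable, and the paper's three-way split on $L(\overline{s},m)\lessgtr L(\overline{t},n)$ corresponds to your comparison of $\mtt{x}\delta_i$ against $\mtt{y}\delta_i$. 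Both proofs then run the same induction on the selection sequence; the paper exhibits an explicit $\lambda_k'$ in each case where you argue by peeling off the matched prefix. Your formulation is cleaner and closer to the standard ``lifting'' arguments for rule-based unification; the paper's buys a fully explicit, implementable selector and a case analysis that mirrors the rule set one-for-one.
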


For the selector algorithm and proof see \autoref{selector}.

For example, consider the unification problem $\langle \mtt{y:2=a:x} \rangle$ and $\delta = (\mtt{x \mapsto 1:2}, \mtt{y \mapsto a:1})$ as a unifier. The unification tree was shown in \autoref{Unifexample}. \textsf{Select}($\delta,\mtt{y:2=a:x}$) would produce selections $( \textsf{Subst3, Decomp2', Orient1, Subst1} )$, which corresponds to the right-most path in the tree. The unifier at the end of this path is $\sigma = (\mtt{x \mapsto y':2,y \mapsto a:y' })$ which is more general than~$\delta$ by instantiation $\lambda = (\mtt{y' \mapsto 1})$.

Now we are able to state our completeness result, which follows directly from Lemma~\ref{thm:prediction}.

\begin{theorem}[Completeness]
For every unifier $\delta$ of a unification problem $\langle s \meqq t\rangle$, there exist a unifier $\sigma$ generated by \textsf{Unify} such that $\sigma \leq \delta$\enspace.
\end{theorem}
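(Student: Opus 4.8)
The plan is to derive completeness as an essentially immediate corollary of the Selector Lemma (Lemma~\ref{thm:prediction}), with the Termination theorem and the structure of the \textsf{Unify} loop supplying the remaining bookkeeping. Let $\delta$ be any unifier of $\langle s \meqq t\rangle$. First I would run $\textsf{Select}(\delta, s \meqq t)$ to obtain the sequence of selections $B = (b_1,\dots,b_k)$. By the first clause of the Selector Lemma, $B$ specifies a path in the derivation tree of $P = \langle s \meqq t, \varnothing\rangle$, i.e.\ a concrete derivation $P = P_0 \Rightarrow P_1 \Rightarrow \dots \Rightarrow P_k$ in which each step $P_{i-1} \Rightarrow P_i$ is the transformation rule recorded in $b_i$.

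Next I would argue that this path ends in a solved form $P_k = \langle \varnothing, \sigma\rangle$ rather than in $\mtt{fail}$ or in a stuck non-solved leaf. It cannot continue indefinitely by the Termination theorem, since $P$ contains no repeated list variables. It cannot end in $\mtt{fail}$ either: the failure rules apply only to problems that have no unifiers (Lemmata~\ref{1.6}--\ref{1.8}), whereas the second clause of the Selector Lemma guarantees that every node along the path carries a substitution compatible with $\delta$, so each $P_i$ remains unifiable. Hence the selected path must terminate at a solved node $\langle \varnothing, \sigma\rangle$.

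Then I would observe that $\sigma$ is in fact produced by the algorithm. \textsf{Unify} performs a breadth-first traversal of the entire derivation tree: at each dequeued problem that is neither solved nor $\mtt{fail}$ it enqueues \emph{every} successor $P'$ with $\mtt{next} \Rightarrow P'$, and it prunes a branch only upon reaching $\mtt{fail}$. Since the path selected by $B$ avoids $\mtt{fail}$, none of $P_1,\dots,P_k$ is pruned, so $P_k = \langle\varnothing,\sigma\rangle$ is eventually dequeued and $\sigma$ is added to the output set $\mtt{U}$. Finally, the second clause of the Selector Lemma, applied at the terminal selection $b_k$ where the equation part is empty, yields an instantiation $\lambda$ with $x\delta =_{\mrm{AU}} x(\sigma\lambda)$ for all $x \in \mrm{Var}(P)$; by the definition of $\leqq_X$ this is precisely $\sigma \leqq_X \delta$, i.e.\ $\sigma \leq \delta$, which completes the proof.

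I expect the only genuine subtlety to be the correct reading of the generality invariant carried along the path: one must check that the per-selection statement $\sigma \circ \lambda \leqq_X \delta$ of the Selector Lemma accumulates correctly, so that at the solved leaf the recorded substitution $\sigma$ is exactly the witness needed. All real difficulty is front-loaded into the Selector Lemma itself --- the construction of \textsf{Select} and the maintenance of this invariant under each transformation rule and under normalization --- and granting that lemma, the completeness theorem reduces to identifying which leaf the selected path reaches together with the fact that \textsf{Unify} records every reachable solved leaf.
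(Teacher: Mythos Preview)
Your proposal is correct and follows essentially the same approach as the paper: both derive the Completeness theorem as a direct corollary of the Selector Lemma (Lemma~\ref{thm:prediction}). The paper's proof is a single sentence to that effect, whereas you spell out the bookkeeping (termination of the selected path, avoidance of $\mtt{fail}$, and that the breadth-first traversal of \textsf{Unify} actually visits the solved leaf) that the paper leaves implicit.
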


\section{Proving Completeness: The \textsf{Select} Algorithm} 
\label{selector}

As already stated, the \textsf{Select} algorithm takes a unification problem $s \meqq t$ together with a unifier $\delta$, and outputs a \emph{path} of the unification tree associated with a unifier that is at least as general $\delta$. 

The algorithm itself is presented below, together with several examples of how it works. Afterwards, we are able to prove the Selector Lemma meaning that \textsf{Select} is correct w.r.t. \textsf{Unify}.

\paragraph{Preprocessing}

Preprocessing of the problem $s=t$ w.r.t. a unifier $\delta$ involves expanding each variable to the correct number of symbols, i.e. $x \rightarrow x_1:\ldots:x_n$ where $n=|x\delta|$\ . 

Call $x$ the \emph{parent~symbol} of $x_1, \ldots, x_n$, and the expanded strings $\overline{s}$ and $\overline{t}$.

Furthermore, for each binding $\{x \mapsto empty\}$ in $\delta$, replace $x$ with $x_e$ in \={s} and \={t}. When \textsf{Select} sees such an empty list variable, it will be able to select Subst2.

Also, \emph{pad} the shorter string with extra $\mtt{empty}$ symbols to make \={s} and \={t} of equal length.

For example, the problem $a:x = y:b$ has unifier $\delta = (x \mapsto acb, y \mapsto acb )$ and the expansion would transform it into $a: x_1 : x_2 : x_3 = y_1: y_2 : y_3 : b$ where each "pseudo"-variable can hold exactly one symbol. 

The expanded \={s} and \={t} can be represented graphically as the diagram below:

\scalebox{0.8}{
\begin{tikzpicture}
\tikzstyle{every path}=[very thick]

\edef\sizetape{0.7cm}
\tikzstyle{tmtape}=[draw,minimum size=\sizetape]
\tikzstyle{tmhead}=[arrow box,draw,minimum size=.5cm,arrow box
arrows={east:.25cm, west:0.25cm}]

\begin{scope}
\begin{scope}[start chain=1 going right,node distance=-0.15mm]
    \node [on chain=1,tmtape](1) {a};
    \node [on chain=1,tmtape] {$x_1$};
    \node [on chain=1,tmtape] {$x_2$};
    \node [on chain=1,tmtape] {$x_3$};
    \node [on chain=1] {\textbf{\={s}}};
\end{scope}

\begin{scope}[yshift=-1.75cm,start chain=1 going right,node distance=-0.15mm]
    \node [on chain=1,tmtape](2) {$y_1$};
    \node [on chain=1,tmtape] {$y_2$};
    \node [on chain=1,tmtape] {$y_3$};
    \node [on chain=1,tmtape] (input) {b};
    \node [on chain=1] {\textbf{\={t}}};
\end{scope}

\node [draw=gray,dashed, fit= (1), inner sep=0.15cm,label=above:m] {};
\node [draw=gray,dashed, fit= (2), inner sep=0.15cm,label=above:n] {};

\end{scope}
\end{tikzpicture}} 
\paragraph{Predicates} ~\\
 
\begin{itemize} 
\item A basic look-ahead is implemented as a predicate $L(s,k)$ that returns the number of adjacent "pseudo"-variables of $s_k$ to the right of $s_k$. If $s_k$ has no "pseudo"-variables or a constant then $L(s,k)=0$.

For example, for the above strings:

\begin{tabular}{c|r}
$L(\overline{s},1) = 0$ & $a$ is a constant \\
$L(\overline{s},2) = 2$ & after $x_1$ there is $x_2$ and $x_3$  \\
$L(\overline{s},3) = 1$ & after $x_2$ there is $x_3$  \\
$L(\overline{s},4) = 0$ & after $x_3$ there is no $x_4$  \\ \\
$L(\overline{t},1) = 2$ & after $y_1$ there is $y_2$ and $y_3$  \\
$L(\overline{t},2) = 1$ & after $y_2$ there is $y_3$  \\
$L(\overline{t},3) = 0$ & after $y_3$ there is no $y_4$  \\ 
$L(\overline{t},4) = 0$ & $b$ is a constant \\
\end{tabular}

This predicate is really only useful for \emph{list} variables. 
 
\item Another predicate $$Type(<string>,<pos>) \to \{ListVar,~AtomVar,~AtomExpr,~EmptyListVar,~empty\} $$ returns the type of the symbol of $<string>$ at position $<pos>$.

For example for the string $\overline{s} = a:"a":x_e:y:empty$

\begin{tabular}{l}
$Type(\overline{s},1) = AtomVar$ \\
$Type(\overline{s},2) = AtomExpr$ \\
$Type(\overline{s},3) = EmptyListVar$ \\
$Type(\overline{s},4) = ListVar$ \\
$Type(\overline{s},5) = empty$ \\
\end{tabular}

\item Another predicate \[ LookAhead(<string>,<pos>) \to \{true, false\} \] determines if the word after the given symbol (or sequence of "pseudo"-variables when $Type(<string>,<pos>) = ListVar$) is not the $\mtt{empty}$ constant (or the blank symbol $\Delta$ at the end of the preprocessed string).

For example for the string $\overline{s} = x_1:x_2:a$

\begin{tabular}{lr}
$LookAhead(\overline{s},1) = true$ & because of a\\
$LookAhead(\overline{s},2) = true$ & because of a\\
$LookAhead(\overline{s},3) = false$ & because at end of string\\
\end{tabular}
\end{itemize}
\paragraph{The \textsf{Select} Algorithm} ~\\

The algorithm is essentially a 3-tape TM with look-ahead. The head looks at both input tapes simultaneously (the "head" of the current problem) as defined by pointers $m$ and $n$, and can move each pointer separately and only to the right.

It starts with preprocessing the input problem as explained above. Depending on the predicate values at $m$ and $n$, specific selections are recorded and pointers are updated. This is repeated until the pointers both reach the end of the input strings.

\textsf{Select} $(s,t,\delta) \to B$ :

Preprocess($s,t,\delta$) $\to \overline{s}, \overline{t}$  

$m = 1, n = 1$  // the head position

$B = ""$ // initial selection

\textbf{if} $m > length(\overline{s}) \text{~and~} n > length(\overline{t})$ \textbf{return} B

Case analysis on $L(\overline{s},m)$ and $L(\overline{t},n)$:

\begin{case} $L(\overline{s},m) < L(\overline{t},n)$ 

Since $L(\overline{t},n) > 0$, $\overline{t_n}$ must be a list variable

Do subcase analysis on type of $\overline{s_m}$ and look ahead results

\begin{subcase} $Type(\overline{s_m}) = ListVar$ ~~ /* $x:L=y:M$ */

/* It cannot be the case that the $LookAhead(\overline{s_m})$ returns $false$ (i.e. $L \neq empty$) as $\overline{t_n}$ is not of type $EmptyListVar$ ($L(\overline{t_n}) > 0$) */

~~~~~~select \textsf{Decomp2'} ~~ /* $x:L=y:M$ with $L \neq empty$ */

~~~~~~$m = m + 1 + L(\overline{s_m})$, $n = n + 1 + L(\overline{s_m})$ ~~ /* move head to end of shorter substring */
\end{subcase}

\begin{subcase} $Type(\overline{s_m}) = AtomVar$ ~~ /* $a:L=y:M$ */

~~~~~~ select \textsf{Orient3}, swap input tapes and pointers

\end{subcase}

\begin{subcase} $Type(\overline{s_m}) = AtomExpr$ ~~ /* $a:L=y:M$ */

~~~~~~ select \textsf{Orient1}, swap input tapes and pointers
\end{subcase}

\begin{subcase} $Type(\overline{s_m}) = EmptyListVar$ ~~ /* $x_e:L=y:M$ */

/* It cannot be the case that the $LookAhead(\overline{s_m})$ returns $false$ (i.e. $L \neq empty$) as $\overline{t_n}$ is not of type $EmptyListVar$ ($L(\overline{t_n}) > 0$) */

~~~~~~ select \textsf{Subst2}, $m$++, no change for $n$
\end{subcase}

\begin{subcase} $Type(\overline{s_m}) = empty$ 

/* cannot happen, as $\overline{t_n}$ is not of type $EmptyListVar$ ($L(\overline{t_n}) > 0$) and $\delta$ is a unifier */

\end{subcase} 
\end{case}

\begin{case} $L(\overline{s},m) > L(\overline{t},n)$

Since $L(\overline{s},m) > 0$, $\overline{s_m}$ must be a list variable

~~ \textbf{if} $LookAhead(\overline{s_m}) = false$ 

~~~~~~~~\textbf{then} \textbf{(Case 2.1)} 

~~~~~~~~~ select \textsf{Subst1} ~~ /* $\mtt{x} = \mtt{y}:M$ */

~~~~~~~~~ $m = m + 1 + L(\overline{s_m})$, $n = length(\overline{t}) + 1$ ~~ /* move head to end of both strings */

~~~~~~~~ \textbf{else} ~~ /* $L \neq empty$ */

\addtocounter{subcase}{1}

\begin{subcase} $Type(\overline{t_n}) = ListVar$

~~~~~~~~~select \textsf{Decomp2} ~~ /* $\mtt{x}:L=\mtt{y}:M$ with $L \neq M$ and $\mtt{x}$ should start with $\mtt{y}$ */

~~~~~~~~~ $m = m + 1 + L(\overline{t_n})$, $n = n + 1 + L(\overline{t_n})$ ~~ /* move head to end of shorter substring */
\end{subcase} 

\begin{subcase} $Type(\overline{t_n}) = AtomVar~or~AtomExpr$

~~~~~~~~~select \textsf{Subst3} ~~ /* $\mtt{x}:L=\mtt{a}:M$ with $L \neq M$ and $\mtt{x}$ should start with $\mtt{a}$ */

~~~~~~~~~ $m$++, $n$++
\end{subcase} 

\begin{subcase} $Type(\overline{t_n}) = EmptyListVar$

~~~~~~~~~select \textsf{Decomp2} ~~ /* $\mtt{x}:L=y_e:M$ with $L \neq M$ and $\mtt{x}$ should start with $y$ */

~~~~~~~~~ $m$++, $n$++
\end{subcase} 
 
\begin{subcase} $Type(\overline{t_n}) = empty$ 

/* cannot happen, same reason as 1.5 */
\end{subcase}

\end{case}

\begin{case}  $L(\overline{s},m) = L(\overline{t},n)$

\begin{subcase} $L(\overline{s},m) = L(\overline{t},n) > 0$

Both $\overline{s}_m$ and $\overline{t}_n$ must be list variables

~~ \textbf{if} $LookAhead(\overline{s_m}) = false$ 

~~~~~~~~~ select \textsf{Subst1} ~~ /* $\mtt{x} = \mtt{y}:M$ */

~~~~~~~~~ $m = m + 1 + L(\overline{s_m})$, $n = length(\overline{t}) + 1$ ~~ /* move head to end of both strings */

~~ \textbf{else} ~~ /* $L \neq empty$ */

~~~~~~~~~ select \textsf{Decomp1} ~~ /* $\mtt{x}:L = \mtt{y}:M$ */

~~~~~~~~~ $m = m + 1 + L(\overline{s_m})$, $n = n + 1 + L(\overline{s_m})$ ~/* move head to next pair of unrelated symbols */

\end{subcase}

\begin{subcase} $L(\overline{s},m) = L(\overline{t},n) = 0$ 

/* This is the largest subcase because $\overline{s_m}$ and $\overline{t_n}$ can be of any of the possible types. */

Do case analysis on $Type(\overline{s}_m)$ and $Type(\overline{t}_n)$:

\textbf{3.2.1. } $(ListVar,ListVar)$ -- same as 3.1

\textbf{3.2.2. } $(ListVar,AtomVar)$ -- same as 3.1

\textbf{3.2.3. } $(ListVar,AtomExpr)$ -- same as 3.1

\textbf{3.2.4. } $(ListVar,EmptyListVar)$ -- same as 2.4

\textbf{3.2.5. } $(ListVar,empty)$ -- cannot happen as $\delta$ is a unifier

\textbf{3.2.6. } $(AtomVar,ListVar)$ -- Orient3 (like 1.2)

\textbf{3.2.7-8 } $(AtomVar,AtomVar)$  or  $(AtomVar,AtomExpr)$ 

~~ \textbf{if} $LookAhead(\overline{s_m}) = false$ 

~~~~~~ \textbf{if} $LookAhead(\overline{t_n}) = false$ 

~~~~~~~~~~~~ select \textsf{Subst1} ~~ /* $\mtt{a} = \mtt{b}$  */

~~~~~~~~~~~~ $m$++, $n$++ ~~ /* move head to end of both strings */

~~~~~~ \textbf{else} 

~~~~~~~~~~~~ select \textsf{Decomp4} ~~ /* $\mtt{a} = \mtt{b:y}$ with $\delta(\mtt{y}) = \mtt{empty}$ */

~~~~~~~~~~~~ $m$++, $n$++ ~~ /* move head to next pair of unrelated symbols */

~~ \textbf{else} ~~ /* $L \neq empty$ */

~~~~~~~~~ select \textsf{Decomp1'} ~~ /* $\mtt{a}:L = \mtt{b}:M$ */

~~~~~~~~~ $m$++, $n$++ ~~ /* move head to next pair of unrelated symbols */

\textbf{3.2.9. } $(AtomVar,EmptyListVar)$ -- same as 1.2

\textbf{3.2.10. } $(AtomVar,empty)$ /* cannot happen due to failure lemmata */

\textbf{3.2.11. } $(AtomExpr,ListVar)$ -- same as 1.3
 
\textbf{3.2.12. } $(AtomExpr,AtomVar)$ -- same as 1.3 

\textbf{3.2.13. } $(AtomExpr,AtomExpr)$ -- 

/* must be the same expression due to not considering the subunification algorithm for String-Char */

~~~~~ select \textsf{Decomp3}, $m$++, $n$++
 
\textbf{3.2.14. } $(AtomExpr,EmptyListVar)$ -- same as 1.3
 
\textbf{3.2.15. } $(AtomExpr,empty)$ -- /* cannot happen due to failure lemmata  */

\textbf{3.2.16-18. } $(EmptyListVar,ListVar)$ or $(EmptyListVar,AtomVar)$ or $(EmptyListVar,AtomExpr)$ 

/* It must be the case that there is something following $\overline{s_m}$ for $\delta$ to be a unifier. */

~~~~~ select \text{Subst2}, $m$++ (like 1.4)

\textbf{3.2.19. } $(EmptyListVar,EmptyListVar)$ -- same as 3.1

\textbf{3.2.20. } $(EmptyListVar,empty)$ -- select \textsf{Subst1}, $m$++, $n = length(\overline{t_n})+1$
 
\textbf{3.2.21. } $(empty,ListVar)$ /* cannot happen as $\delta$ is a unifier */
 
\textbf{3.2.22 \& 3.2.23 } $(empty,AtomVar)$ or $(empty,AtomExpr)$ /* cannot happen due to failure lemmata  */

\textbf{3.2.24. } $(empty,EmptyListVar)$ -- select \textsf{Orient4}, swap tapes and pointers
 
\textbf{3.2.25. } $(empty,empty)$ select \textsf{Remove} /* nothing left to solve */

\end{subcase}
\end{case}
\paragraph{Examples}

Below are some examples of how the algorithm behaves when presented with a unification problem and a unifier.

The first two examples are of the problem $P = \langle \mtt{y:2 \meqq a:x} \rangle$, with its unification tree already shown in \autoref{Unifexample}. The third example is of the problem $P = \langle \mtt{n \meqq y:n} \rangle$ that has a single most general unifier $\sigma = \{ \mtt{y \mapsto empty} \}$.

\begin{example}
 \[P = \langle \mtt{y:2 \meqq a:x} \rangle \] 
 \[\delta = \{ \mtt{a \mapsto 2,~x \mapsto empty,~y \mapsto empty} \} \]
 
\scalebox{0.95}{ 
 
\begin{tikzpicture}
\tikzstyle{every path}=[very thick]

\edef\sizetape{0.7cm}
\tikzstyle{tmtape}=[draw,minimum size=\sizetape]
\tikzstyle{tmhead}=[arrow box,draw,minimum size=.5cm,arrow box
arrows={east:.25cm, west:0.25cm}]

\begin{scope}
\begin{scope}[start chain=1 going right,node distance=-0.15mm]
    \node [on chain=1,tmtape](1) {$y_e$};
    \node [on chain=1,tmtape] {2};
    \node [on chain=1,tmtape] {$\Delta$};
    \node [on chain=1](3) {\textbf{\={s}}};
\end{scope}

\begin{scope}[yshift=-1.75cm,start chain=1 going right,node distance=-0.15mm]
    \node [on chain=1,tmtape](2) {$a$};
    \node [on chain=1,tmtape] {$x_e$};
    \node [on chain=1,tmtape] (input) {$\Delta$};
    \node [on chain=1](4) {\textbf{\={t}}};
\end{scope}

\begin{scope}[yshift=-3.0cm]
	\node[xshift=0.7cm](s) {select \textsf{Subst2}};
	\node[below of=s,yshift=0.5cm] {$m$++};
	\node[below of=s] {(case 3.2.17)};
\end{scope}

\node [draw=gray,dashed, fit= (1), inner sep=0.13cm,label=above:{$m=1$}] {};
\node [draw=gray,dashed, fit= (2), inner sep=0.13cm,label=above:{$n=1$}] {};
\node at ($(3)!0.5!(4)$) {$\rightarrow$};

\end{scope}

\begin{scope}[xshift=3cm]
\begin{scope}[start chain=1 going right,node distance=-0.15mm]
    \node [on chain=1,tmtape] {$y_e$};
    \node [on chain=1,tmtape](1) {2};
    \node [on chain=1,tmtape] {$\Delta$};
    \node [on chain=1](3) {\textbf{\={s}}};
\end{scope}

\begin{scope}[yshift=-1.75cm,start chain=1 going right,node distance=-0.15mm]
    \node [on chain=1,tmtape](2) {$a$};
    \node [on chain=1,tmtape] {$x_e$};
    \node [on chain=1,tmtape] (input) {$\Delta$};
    \node [on chain=1](4) {\textbf{\={t}}};
\end{scope}

\begin{scope}[yshift=-3.0cm]
	\node[xshift=0.7cm](s) {select \textsf{Orient1}};
	\node[below of=s,yshift=0.5cm] {swap tapes \& pointers};
	\node[below of=s] {(case 3.2.12)};
\end{scope}

\node [draw=gray,dashed, fit= (1), inner sep=0.13cm,label=above:{$m=2$}] {};
\node [draw=gray,dashed, fit= (2), inner sep=0.13cm,label=above:{$n=1$}] {};
\node at ($(3)!0.5!(4)$) {$\rightarrow$};

\end{scope}

\begin{scope}[xshift=6cm]
\begin{scope}[yshift=-1.75cm,start chain=1 going right,node distance=-0.15mm]
    \node [on chain=1,tmtape] {$y_e$};
    \node [on chain=1,tmtape](1) {2};
    \node [on chain=1,tmtape] {$\Delta$};
    \node [on chain=1](4) {\textbf{\={t}}};
\end{scope}

\begin{scope}[start chain=1 going right,node distance=-0.15mm]
    \node [on chain=1,tmtape](2) {$a$};
    \node [on chain=1,tmtape] {$x_e$};
    \node [on chain=1,tmtape] (input) {$\Delta$};
    \node [on chain=1](3) {\textbf{\={s}}};
\end{scope}

\begin{scope}[yshift=-3.0cm]
	\node(s)[xshift=0.7cm] {select \textsf{Decomp1'}};
	\node[below of=s,yshift=0.5cm] {$m$++,$n$++};
	\node[below of=s] {(case 3.2.8)};
\end{scope}

\node [draw=gray,dashed, fit= (1), inner sep=0.13cm,label=above:{$n=2$}] {};
\node [draw=gray,dashed, fit= (2), inner sep=0.13cm,label=above:{$m=1$}] {};
\node at ($(3)!0.5!(4)$) {$\rightarrow$};

\end{scope}

\begin{scope}[xshift=9cm]
\begin{scope}[yshift=-1.75cm,start chain=1 going right,node distance=-0.15mm]
    \node [on chain=1,tmtape] {$y_e$};
    \node [on chain=1,tmtape] {2};
    \node [on chain=1,tmtape](1) {$\Delta$};
    \node [on chain=1](4) {\textbf{\={t}}};
\end{scope}

\begin{scope}[start chain=1 going right,node distance=-0.15mm]
    \node [on chain=1,tmtape] {$a$};
    \node [on chain=1,tmtape](2) {$x_e$};
    \node [on chain=1,tmtape] (input) {$\Delta$};
    \node [on chain=1](3) {\textbf{\={s}}};
\end{scope}

\begin{scope}[yshift=-3.0cm]
	\node(s)[xshift=0.7cm] {select \textsf{Subst1} };
	\node[below of=s,yshift=0.5cm] {$m$++, $n=3$};
	\node[below of=s] {(case 3.2.20)};
\end{scope}

\node [draw=gray,dashed, fit= (2), inner sep=0.13cm,label=above:{$m=2$}] {};
\node [draw=gray,dashed, fit= (1), inner sep=0.13cm,label=above:{$n=3$}] {};
\node[xshift=0.3cm] at ($(3)!0.5!(4)$) {$\rightarrow$};

\end{scope}

\begin{scope}[xshift=12cm]
\begin{scope}[start chain=1 going right,node distance=-0.15mm]
    \node [on chain=1,tmtape] {$y_e$};
    \node [on chain=1,tmtape] {2};
    \node [on chain=1,tmtape](1) {$\Delta$};
    \node [on chain=1] {\textbf{\={s}}};
\end{scope}

\begin{scope}[yshift=-1.75cm,start chain=1 going right,node distance=-0.15mm]
    \node [on chain=1,tmtape] {$a$};
    \node [on chain=1,tmtape] {$x_e$};
    \node [on chain=1,tmtape](2) {$\Delta$};
    \node [on chain=1] {\textbf{\={t}}};
\end{scope}

\begin{scope}[yshift=-3.25cm]
	\node(s)[xshift=0.7cm] {end };
\end{scope}

\node [draw=gray,dashed, fit= (1), inner sep=0.13cm,label=above:{$m=3$}] {};
\node [draw=gray,dashed, fit= (2), inner sep=0.13cm,label=above:{$n=3$}] {};
\end{scope}

\end{tikzpicture}}
\end{example}
\begin{example}
 \[P = \langle \mtt{y:2 \meqq a:x} \rangle \] 
 \[\delta = \{ \mtt{x \mapsto a:1,~y \mapsto 1:2} \} \] 

\scalebox{0.95}{
\begin{tikzpicture}
\tikzstyle{every path}=[very thick]

\edef\sizetape{0.7cm}
\tikzstyle{tmtape}=[draw,minimum size=\sizetape]
\tikzstyle{tmhead}=[arrow box,draw,minimum size=.5cm,arrow box
arrows={east:.25cm, west:0.25cm}]

\begin{scope}
\begin{scope}[start chain=1 going right,node distance=-0.15mm]
    \node [on chain=1,tmtape](1) {$y_1$};
    \node [on chain=1,tmtape] {$y_2$};
    \node [on chain=1,tmtape] {2};
    \node [on chain=1,tmtape] {$\Delta$};
    \node [on chain=1](3) {\textbf{\={s}}};
\end{scope}

\begin{scope}[yshift=-1.75cm,start chain=1 going right,node distance=-0.15mm]
    \node [on chain=1,tmtape](2) {$a$};
    \node [on chain=1,tmtape] {$x_1$};
    \node [on chain=1,tmtape] {$x_2$};
    \node [on chain=1,tmtape] (input) {$\Delta$};
    \node [on chain=1](4) {\textbf{\={t}}};
\end{scope}

\begin{scope}[yshift=-3.0cm]
	\node[xshift=0.7cm](s) {select \textsf{Subst3}};
	\node[below of=s,yshift=0.5cm] {$m++,n++$};
	\node[below of=s] {(case 2.3)};
\end{scope}

\node [draw=gray,dashed, fit= (1), inner sep=0.13cm,label=above:{$m=1$}] {};
\node [draw=gray,dashed, fit= (2), inner sep=0.13cm,label=above:{$n=1$}] {};
\node[xshift=0.5cm] at ($(3)!0.5!(4)$) {$\rightarrow$};

\end{scope}

\begin{scope}[xshift=4.5cm]
\begin{scope}[start chain=1 going right,node distance=-0.15mm]
    \node [on chain=1,tmtape] {$y_1$};
    \node [on chain=1,tmtape](1) {$y_2$};
    \node [on chain=1,tmtape] {2};
    \node [on chain=1,tmtape] {$\Delta$};
    \node [on chain=1](3) {\textbf{\={s}}};
\end{scope}

\begin{scope}[yshift=-1.75cm,start chain=1 going right,node distance=-0.15mm]
    \node [on chain=1,tmtape] {$a$};
    \node [on chain=1,tmtape](2) {$x_1$};
    \node [on chain=1,tmtape] {$x_2$};
    \node [on chain=1,tmtape] (input) {$\Delta$};
    \node [on chain=1](4) {\textbf{\={t}}};
\end{scope}

\begin{scope}[yshift=-3.0cm]
	\node[xshift=0.7cm](s) {select \textsf{Decomp2'}};
	\node[below of=s,yshift=0.5cm] {$m++,n++$};
	\node[below of=s] {(case 2.1)};
\end{scope}

\node [draw=gray,dashed, fit= (1), inner sep=0.13cm,label=above:{$m=2$}] {};
\node [draw=gray,dashed, fit= (2), inner sep=0.13cm,label=above:{$n=2$}] {};
\node[xshift=0.5cm] at ($(3)!0.5!(4)$) {$\rightarrow$};

\end{scope}

\begin{scope}[xshift=9cm]
\begin{scope}[start chain=1 going right,node distance=-0.15mm]
    \node [on chain=1,tmtape] {$y_1$};
    \node [on chain=1,tmtape] {$y_2$};
    \node [on chain=1,tmtape](1) {2};
    \node [on chain=1,tmtape] {$\Delta$};
    \node [on chain=1](3) {\textbf{\={s}}};
\end{scope}

\begin{scope}[yshift=-1.75cm,start chain=1 going right,node distance=-0.15mm]
    \node [on chain=1,tmtape] {$a$};
    \node [on chain=1,tmtape] {$x_1$};
    \node [on chain=1,tmtape](2) {$x_2$};
    \node [on chain=1,tmtape] (input) {$\Delta$};
    \node [on chain=1](4) {\textbf{\={t}}};
\end{scope}

\begin{scope}[yshift=-3.0cm]
	\node(s)[xshift=0.7cm] {select \textsf{Orient1}};
	\node[below of=s,yshift=0.5cm] {swap tapes \& pointers};
	\node[below of=s] {(case 3.2.11)};
\end{scope}

\node [draw=gray,dashed, fit= (2), inner sep=0.13cm,label=above:{$m=3$}] {};
\node [draw=gray,dashed, fit= (1), inner sep=0.13cm,label=above:{$n=3$}] {};
\node[xshift=0.5cm] at ($(3)!0.5!(4)$) {$\rightarrow$};

\end{scope}

\begin{scope}[xshift=3cm,yshift=-5.5cm]
\begin{scope}[yshift=-1.75cm,start chain=1 going right,node distance=-0.15mm]
    \node [on chain=1,tmtape] {$y_1$};
    \node [on chain=1,tmtape] {$y_2$};
    \node [on chain=1,tmtape](1) {2};
    \node [on chain=1,tmtape] {$\Delta$};
    \node [on chain=1](3) {\textbf{\={t}}};
\end{scope}

\begin{scope}[start chain=1 going right,node distance=-0.15mm]
    \node [on chain=1,tmtape] {$a$};
    \node [on chain=1,tmtape] {$x_1$};
    \node [on chain=1,tmtape](2) {$x_2$};
    \node [on chain=1,tmtape] {$\Delta$};
    \node [on chain=1](4) {\textbf{\={s}}};
\end{scope}

\begin{scope}[yshift=-3.0cm]
	\node(s)[xshift=0.7cm] {select \textsf{Subst1} };
	\node[below of=s,yshift=0.5cm] {$m++, n++$};
	\node[below of=s] {(case 3.2.3)};
\end{scope}

\node [draw=gray,dashed, fit= (2), inner sep=0.13cm,label=above:{$m=3$}] {};
\node [draw=gray,dashed, fit= (1), inner sep=0.13cm,label=above:{$n=3$}] {};
\node[xshift=1cm] at ($(3)!0.5!(4)$) {$\rightarrow$};
\node[xshift=-4cm] at ($(3)!0.5!(4)$) {$\rightarrow$};

\end{scope}

\begin{scope}[xshift=9cm,yshift=-5.5cm]
\begin{scope}[start chain=1 going right,node distance=-0.15mm]
    \node [on chain=1,tmtape] {$y_1$};
    \node [on chain=1,tmtape] {$y_2$};
    \node [on chain=1,tmtape] {2};
    \node [on chain=1,tmtape](1) {$\Delta$};
    \node [on chain=1] {\textbf{\={s}}};
\end{scope}

\begin{scope}[yshift=-1.75cm,start chain=1 going right,node distance=-0.15mm]
    \node [on chain=1,tmtape] {$a$};
    \node [on chain=1,tmtape] {$x_1$};
    \node [on chain=1,tmtape] {$x_2$};
    \node [on chain=1,tmtape](2) {$\Delta$};
    \node [on chain=1] {\textbf{\={t}}};
\end{scope}

\begin{scope}[yshift=-3.25cm]
	\node(s)[xshift=0.7cm] {end };
\end{scope}

\node [draw=gray,dashed, fit= (1), inner sep=0.13cm,label=above:{$m=4$}] {};
\node [draw=gray,dashed, fit= (2), inner sep=0.13cm,label=above:{$n=4$}] {};
\end{scope}

\end{tikzpicture}}
\end{example}
\begin{example}
 \[P = \langle \mtt{n \meqq y:n} \rangle \\ 
 \delta = \{ \mtt{y \mapsto empty} \} \]
 
\scalebox{0.95}{
\begin{tikzpicture}
\tikzstyle{every path}=[very thick]

\edef\sizetape{0.7cm}
\tikzstyle{tmtape}=[draw,minimum size=\sizetape]
\tikzstyle{tmhead}=[arrow box,draw,minimum size=.5cm,arrow box
arrows={east:.25cm, west:0.25cm}]

\begin{scope}
\begin{scope}[start chain=1 going right,node distance=-0.15mm]
    \node [on chain=1,tmtape](1) {$n$};
    \node [on chain=1,tmtape] {$\mtt{e}$};
    \node [on chain=1,tmtape] {$\Delta$};
    \node [on chain=1](3) {\textbf{\={s}}};
\end{scope}

\begin{scope}[yshift=-1.75cm,start chain=1 going right,node distance=-0.15mm]
    \node [on chain=1,tmtape](2) {$y_e$};
    \node [on chain=1,tmtape] {$n$};
    \node [on chain=1,tmtape] (input) {$\Delta$};
    \node [on chain=1](4) {\textbf{\={t}}};
\end{scope}

\begin{scope}[yshift=-3.0cm]
	\node[xshift=0.7cm](s) {select \textsf{Orient3}};
	\node[below of=s,yshift=0.5cm] {swap tapes \& pointers};
	\node[below of=s] {(case 3.2.9)};
\end{scope}

\node [draw=gray,dashed, fit= (1), inner sep=0.13cm,label=above:{$m=1$}] {};
\node [draw=gray,dashed, fit= (2), inner sep=0.13cm,label=above:{$n=1$}] {};
\node[xshift=0.1cm] at ($(3)!0.5!(4)$) {$\rightarrow$};

\end{scope}

\begin{scope}[xshift=3cm]
\begin{scope}[yshift=-1.75cm,start chain=1 going right,node distance=-0.15mm]
    \node [on chain=1,tmtape](1) {$n$};
    \node [on chain=1,tmtape] {$\mtt{e}$};
    \node [on chain=1,tmtape] {$\Delta$};
    \node [on chain=1](3) {\textbf{\={t}}};
\end{scope}

\begin{scope}[start chain=1 going right,node distance=-0.15mm]
    \node [on chain=1,tmtape](2) {$y_e$};
    \node [on chain=1,tmtape] {$n$};
    \node [on chain=1,tmtape] (input) {$\Delta$};
    \node [on chain=1](4) {\textbf{\={s}}};
\end{scope}

\begin{scope}[yshift=-3.0cm]
	\node[xshift=0.7cm](s) {select \textsf{Subst2}};
	\node[below of=s,yshift=0.5cm] {$m$++};
	\node[below of=s] {(case 3.2.17)};
\end{scope}

\node [draw=gray,dashed, fit= (2), inner sep=0.13cm,label=above:{$m=1$}] {};
\node [draw=gray,dashed, fit= (1), inner sep=0.13cm,label=above:{$n=1$}] {};
\node[xshift=0.1cm] at ($(3)!0.5!(4)$) {$\rightarrow$};

\end{scope}

\begin{scope}[xshift=6cm]
\begin{scope}[yshift=-1.75cm,start chain=1 going right,node distance=-0.15mm]
    \node [on chain=1,tmtape](1) {$n$};
    \node [on chain=1,tmtape] {$\mtt{e}$};
    \node [on chain=1,tmtape] {$\Delta$};
    \node [on chain=1](3) {\textbf{\={t}}};
\end{scope}

\begin{scope}[start chain=1 going right,node distance=-0.15mm]
    \node [on chain=1,tmtape] {$y_e$};
    \node [on chain=1,tmtape](2) {$n$};
    \node [on chain=1,tmtape] (input) {$\Delta$};
    \node [on chain=1](4) {\textbf{\={s}}};
\end{scope}

\begin{scope}[yshift=-3.0cm]
	\node[xshift=0.7cm](s) {select \textsf{Subst1}};
	\node[below of=s,yshift=0.5cm] {$m$++,$n$++};
	\node[below of=s] {(case 3.2.7)};
\end{scope}

\node [draw=gray,dashed, fit= (2), inner sep=0.13cm,label=above:{$m=2$}] {};
\node [draw=gray,dashed, fit= (1), inner sep=0.13cm,label=above:{$n=1$}] {};
\node[xshift=0.1cm] at ($(3)!0.5!(4)$) {$\rightarrow$};
\end{scope}

\begin{scope}[xshift=9cm]
\begin{scope}[yshift=-1.75cm,start chain=1 going right,node distance=-0.15mm]
    \node [on chain=1,tmtape] {$n$};
    \node [on chain=1,tmtape](1) {$\mtt{e}$};
    \node [on chain=1,tmtape] {$\Delta$};
    \node [on chain=1](3) {\textbf{\={t}}};
\end{scope}

\begin{scope}[start chain=1 going right,node distance=-0.15mm]
    \node [on chain=1,tmtape] {$y_e$};
    \node [on chain=1,tmtape] {$n$};
    \node [on chain=1,tmtape](2) {$\Delta$};
    \node [on chain=1](4) {\textbf{\={s}}};
\end{scope}

\begin{scope}[yshift=-3.25cm]
	\node(s)[xshift=0.7cm] {end };
\end{scope}

\node [draw=gray,dashed, fit= (1), inner sep=0.13cm,label=above:{$n=2$}] {};
\node [draw=gray,dashed, fit= (2), inner sep=0.13cm,label=above:{$m=3$}] {};
\end{scope}

\end{tikzpicture}}
\end{example}
\setcounter{lemma}{4}
\begin{lemma}[Selector Lemma]
Let $B$ be the sequence of selections defined by $\textsf{Select}(\delta, s \meqq t )$ where $\delta(s)=\delta(t)$

1) \textsf{Unify}($s \meqq t$) has a path specified by $B$\enspace.

2) For all selections $b \in B$:

\begin{itemize}
\item 2.1) the symbols examined by \textsf{Unify} at $b$ are parent symbols of the symbols \textsf{Select} sees at head position $(m,n)$ for selection $b$
\item 2.2) if $\sigma$ is the substitution corresponding to selection $b$, then 
$ \sigma \leq \delta$, i.e. exist substitution $\lambda$ such that $\sigma \circ \lambda \subseteq \delta $
\end{itemize}
\end{lemma}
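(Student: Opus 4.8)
The plan is to prove statements (1), (2.1) and (2.2) simultaneously by induction on the number of selections already emitted by \textsf{Select}, carrying a single correspondence invariant between the configuration of \textsf{Select} and the problem currently processed by \textsf{Unify}. Suppose that after emitting the prefix $(b_1,\dots,b_i)$ of $B$ the head of \textsf{Select} is at position $(m,n)$ on the expanded tapes $\overline{s},\overline{t}$, while \textsf{Unify} has reached a problem $\langle s'=t',\,\sigma'\rangle$ along the path determined by the same prefix. The invariant $\mathcal{J}(i)$ I would maintain is threefold: (J1) the suffix of $\overline{s}$ from position $m$ collapses --- by re-fusing consecutive pseudo-variables with a common (dynamically maintained) parent, reading $x_e$ as its parent list variable, and deleting trailing $\Delta$ --- to $s'$, and likewise the suffix of $\overline{t}$ from $n$ collapses to $t'$; in particular the head symbols of $s'$ and $t'$ are the parent symbols of $\overline{s}_m$ and $\overline{t}_n$, which is exactly claim (2.1); (J2) $\delta =_{\mrm{AU}} \sigma' \circ \delta'$ on $\mrm{Var}(s=t)$ for some substitution $\delta'$; and (J3) $\delta'$ is a unifier of the residual problem $\langle s'=t'\rangle$, suitably extended on the fresh variables introduced so far. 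This factorization invariant simultaneously yields (2.2): taking $\lambda=\delta'$ gives $\sigma'\circ\lambda =_{\mrm{AU}} \delta$, hence $\sigma' \leqq \delta$.

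For the base case, before any selection \textsf{Unify} sits at the root $\langle s=t,\varnothing\rangle$ and \textsf{Select} at $(m,n)=(1,1)$; taking $\delta'=\delta$, conditions (J1)--(J3) hold because the preprocessing aligns the two tapes cell-by-cell under $\delta$ (each cell holds one $\delta$-atom, empty variables are marked $x_e$, and the shorter tape is padded), and $\delta(s)=_{\mrm{AU}}\delta(t)$ by hypothesis. In the inductive step each \textsf{Select} subcase names exactly one \textsf{Unify} rule, and for that rule I would verify three things. First, \emph{applicability}: the Boolean side-condition of the \textsf{Unify} rule (e.g. $\mrm{type}(\mtt{x})=\mtt{list}$, $L\neq\mtt{empty}$, $\mrm{type}(\mtt{a})\leq\mrm{type}(\mtt{x})$) is forced by the \textsf{Select} case hypothesis on $L(\overline{s},m)$, $L(\overline{t},n)$, on the $Type$ of $\overline{s}_m,\overline{t}_n$, and on $LookAhead$, using that aligned cells share a $\delta$-value; this certifies that $\mtt{next}\Rightarrow P'$ is a legal edge and gives claim (1). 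Second, (J1) is re-established: the pointer updates match how the chosen rule consumes or splits the head --- for \textsf{Decomp1}, \textsf{Decomp2}, \textsf{Decomp2'} the head advances by $1+L$ over a whole list-variable block, mirroring removal of an aligned prefix, and for substituting rules the unconsumed pseudo-variables are re-parented to the fresh variable; for \textsf{Subst2}, \textsf{Orient1-4} and \textsf{Remove} the bookkeeping is a single-cell move, a tape swap, or termination. Third, (J2)--(J3) are propagated by defining the new residual unifier $\delta''$: for rules not touching $\sigma'$ (\textsf{Remove}, \textsf{Decomp3}, \textsf{Orient1-4}) set $\delta''=\delta'$, while for the substituting rules I peel the appropriate prefix off $\delta'$ --- for \textsf{Subst3} ($\mtt{x}\mapsto \mtt{a}:\mtt{x'}$) set $\delta''(\mtt{x'})$ to $\delta'(\mtt{x})$ with its leading atom $\mtt a$ removed and check $\{\mtt{x}\mapsto\mtt{a}:\mtt{x'}\}\circ\delta''$ agrees with $\delta'$ on $\mtt x$, and analogously for \textsf{Subst1}, \textsf{Subst2}, \textsf{Decomp1}, \textsf{Decomp1'}, \textsf{Decomp2}, \textsf{Decomp2'}, \textsf{Decomp4}. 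Each definition is well-posed precisely because the look-ahead values recorded the lengths of the relevant blocks.

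For termination and the conclusion, since the preprocessed tapes have equal length and \textsf{Select} only advances pointers, the loop halts with both heads past the end; by (J1) the residual is then $\langle\varnothing\rangle$, so the last \textsf{Unify} problem is a solved form $\langle\varnothing,\sigma'\rangle$ and the final selection closes it (\textsf{Subst1} or \textsf{Remove}). The emitted $B$ is thus a full root-to-leaf path of the tree searched by \textsf{Unify} (claim 1), the head-correspondence holds at every step (claim 2.1), and the factorization (J2) at the leaf gives $\sigma' \circ \delta' =_{\mrm{AU}} \delta$, i.e. $\sigma' \leqq \delta$ (claim 2.2). Taking $\sigma=\sigma'$ then yields the Completeness Theorem.

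The main obstacle I anticipate is twofold. The sheer size of the case analysis --- Case 3.2 alone splits into twenty-five type combinations --- means the real work is confirming, uniformly, that every \textsf{Select} branch both meets the \textsf{Unify} side-condition and updates $(m,n)$ consistently with (J1); a careful statement of the collapse map in (J1) is what makes this manageable. The genuinely delicate point, however, is the bookkeeping for fresh list variables together with the unit law: rules \textsf{Decomp2}, \textsf{Decomp2'} and \textsf{Subst3} introduce fresh $\mtt{x'}$, so $\delta'$ must be extended coherently and the pseudo-variable parent map refreshed at each step while keeping (J2) an $\mrm{AU}$-identity, and the interplay between \textsf{Unify}'s normalization (which deletes \texttt{empty}) and \textsf{Select}'s explicit $x_e$/$\Delta$ markers must be shown to commute with the collapse map, lest (J1) drift. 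Finally, discharging the \emph{cannot happen} subcases --- justified by $\delta$ being a unifier and by Lemmas~\ref{1.6}--\ref{1.8} --- is exactly what guarantees \textsf{Select} never stalls, and hence that the produced path is total.
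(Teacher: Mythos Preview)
Your proposal is correct and follows essentially the same approach as the paper: induction on the number of emitted selections, with a case analysis over every \textsf{Select} branch verifying (i) applicability of the named \textsf{Unify} rule, (ii) the parent-symbol correspondence, and (iii) the factorization $\sigma_{k+1}\circ\lambda_{k+1}\subseteq\delta$. Your packaging of the invariant as (J1)--(J3), in particular carrying (J3) that the residual substitution $\delta'$ unifies the current problem, is slightly cleaner than the paper's version, which leaves this implicit and appeals directly to ``$\delta$ is a unifier'' when ruling out the impossible subcases; your explicit re-parenting of pseudo-variables to fresh list variables likewise makes precise what the paper handles informally via the ad-hoc $\lambda_k'$ constructions.
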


\begin{proof}
By induction on the number of the selections $(b_1,b_2,\ldots)$ of $B$

\paragraph{Base Case} The initial selection $b_1$ of \textsf{Select} is based on the first symbols of $\overline{s}$ and $\overline{t}$ as this is the initial head position ($m$ and $n$ are initialized to 1). The transformation rules of \textsf{Unify} also examine the head of the initial problem. Also, \textsf{Select} does not throw away list variables that should be removed due to Subst2, so condition 2.1) is trivially satisfied. For 2.2), $\sigma_1 = \varnothing$ by definition of initial unification problem, and let $\lambda_1 = \varnothing$. Therefore $\sigma_1 \circ \lambda_1 = \varnothing \subseteq \delta$ as required.

\paragraph{Hypothesis} Assume the statements of the theorem are true for selection $b_k$ with a corresponding node in the unification tree $\langle P_k, \sigma_k \rangle$ and let $(m,n)$ be the head position of \textsf{Select} before outputting selection $b_{k+1}$.

2.1) the symbols at the head of the problem $P_k$  ($\mtt{x}$ and $\mtt{y}$) are parent symbols of  $\overline{s}_m , \overline{t}_n$ 

2.1) exists substitution  $\lambda_k$ such that  $\sigma_k \circ \lambda _k \subseteq \delta $

\paragraph{Inductive case} There are several possible cases for the selection $b_{n+1}$. We have to show that from $b_n$,

\begin{enumerate}[i]
\item \label{en:i} \textsf{Unify} \emph{can} generate selection $b_{k+1}$ (i.e. the selected rule is applicable) and
\item \label{en:ii}
The symbols at the head of the next unification problem $P_{k+1}$ (where $P_k \Rightarrow P_{k+1}$ via the rule $b_{k+1}$) are parent symbols of $\overline{s}_{m_{new}}$ and $\overline{t}_{n_{new}}$
\item \label{en:iii} exists instantiation $\lambda_{n+1}$ such that $\sigma_{n+1}~\circ~\lambda_{n+1}~\subseteq~\delta$
\end{enumerate}

For all \textsf{Orient} rules, we can notice that:
\begin{itemize}
\item they do not change the head of the unification problem nor they generate a supplementary substitution $\sigma_{k+1}$, so conditions \ref{en:ii} and \ref{en:iii} trivially hold
\item they always swap left-hand side with right-hand side; so does \textsf{Select} for each \textsf{Orient} selection
\end{itemize}
Therefore, we only need to prove condition \ref{en:i} for each \textsf{Orient} selection.

We now have to consider all the cases from $P_k$ to $P_{k+1}$ in \textsf{Select} via rule $b_{k+1}$.

\setcounter{case}{0}
\begin{case} $L(\overline{s},m) < L(\overline{t},n)$ 

It must be the case that $\overline{t}_n$ is a list variable because $L(\overline{t},n) > 0$ only for list variables.

Call $\mtt{x}$ and $\mtt{y}$ the parent symbols of $\overline{s_m}$ and $\overline{t_n}$  in $s$ and $t$.

\begin{subcase} $Type(\overline{s_m}) = ListVar$

 By hypothesis, it follows that $\mtt{x}$ and $\mtt{y}$ are also list variables. 

Suppose nothing follows $\mtt{x}$ (i.e. $LookAhead(\overline{s}_m) = false$). Then the unification problem at node $b_k$ must be of the form $\mtt{x} = \mtt{y}:M$. However, we have that  $L(\overline{s},m) < L(\overline{t},n)$ which contradicts that $\delta$ is a unifier.

Therefore, there must be something following $\mtt{x}$. Then the current unification problem must be of the form $\mtt{x}:L = \mtt{y}:M$. Then rule \textsf{Decomp2'} is applicable (condition \ref{en:i} satisfied). Because $\mtt{x}$ has the shorter expansion by $\delta$, $\overline{t}_{n+L(\overline{s}_m)+1}$ is a child variable of $\mtt{y}$ and condition \ref{en:ii} holds. 

For \ref{en:iii}, let $\lambda_k' = (\overline{s_n} \ldots \overline{s_{m+L(\overline{s}_m)}} \mapsto T)$ where $T = \delta(\overline{s_m} \ldots \overline{s_{m+L(\overline{s}_m)}})$. For $\delta$ to be a unifier, it must contain term(s) $(\overline{t_n} \ldots \overline{t_{n+L(\overline{s}_m)}} \mapsto T)$:

\begin{tabular}{lcl}
$\delta$ &$\supseteq$& $\sigma_k \circ \lambda_k \cup (\overline{t_n} \ldots \overline{t_{n+L(\overline{s}_m)}} \mapsto T)$ \\
&=& $\sigma_k \circ \lambda_k \cup ((\overline{t_n} \ldots \overline{t_{n+L(\overline{s}_m)}} \mapsto \overline{s_m} \ldots \overline{s_{m+L(\overline{s}_m)}}) \circ \lambda_k')$ (same as saying $\mtt{y}$ starts with $\mtt{x}$) \\
&=& $\sigma_k \circ \lambda_k \cup ((\mtt{y \mapsto x:y'}) \circ \lambda_k')$ \\
&=& $\sigma_k \circ (\mtt{y \mapsto x:y'}) \circ \lambda_k \circ \lambda_k'$ (since no repeated list variables)\\ 
&=& $\sigma_{k+1} \circ \lambda_{k+1}$ as required (set $\lambda_{k+1} = \lambda_k \circ \lambda_k'$)\\
\end{tabular}

\end{subcase}

\begin{subcase} $Type(\overline{s_m}) = AtomVar$

It holds by hypothesis that $\mtt{x}$ and $\mtt{y}$ are atom and list variables so rule \textsf{Orient3} is applicable.
\end{subcase}

\begin{subcase} $Type(\overline{s_m}) = AtomExpr$ 

The argument is the same as above except that $\mtt{x}$ is an atom expression and \textsf{Orient1} becomes applicable.
\end{subcase}

\begin{subcase} $Type(\overline{s_m}) = EmptyListVar$

Because $L(\overline{t}_n) > L(\overline{s}_m) = 0$, there must be something after $\overline{s_{m+L(\overline{s}_m)}}$ (i.e. $LookAhead(\overline{s}_m) = true$) for $\delta$ to be a unifier. Also, by hypothesis $\mtt{x}$ must be a list variable since $\mtt{x}\delta~=~\mtt{empty}$. Therefore rule \textsf{Subst2} is applicable (condition \ref{en:i} holds). The resulting unification problem is $(\{L = M\} \{\mtt{x} \mapsto \texttt{empty}\} ;~\sigma_k \circ \{\mtt{x} \mapsto \texttt{empty}\})$. \textsf{Select} moves the position $m$ by 1 and keeps $n$, therefore condition \ref{en:ii} also holds. 

For condition \ref{en:iii}:

\begin{tabular}{lcl}
$\delta$ &$\supseteq$& $\sigma_k \circ \lambda_k \cup (\mtt{x} \mapsto \mtt{empty})$ \\
&=& $\sigma_k \circ (\mtt{x} \mapsto \mtt{empty}) \circ \lambda_k$ (since no repeated list variables)\\ 
&=& $\sigma_{k+1} \circ \lambda_{k+1}$ as required (set $\lambda_{k+1} = \lambda_k$)\\
\end{tabular}
\end{subcase}

\begin{subcase} $Type(\overline{s_m}) = \mtt{empty}$ 

As explained, this subcase cannot occur if $\delta$ is a unifier.

\end{subcase}

\end{case}

\begin{case}  $L(\overline{s},m) > L(\overline{t},n)$

As in the previous case, it follows that $\overline{s}_m$ is a list variable with $|\delta(x)|>0$ for its parent symbol~$\mtt{x}$. By hypothesis, $\mtt{x}$ is also a list variable.

\begin{subcase} Let $LookAhead(\overline{s}_m) = false$, i.e. nothing follows $\mtt{x}$

The unification problem at node $b_k$ must be of the form $\mtt{x} = L$ so rule Subst1 is applicable (condition \ref{en:i} satisfied). Since the resulting unification problem is $\langle \varnothing, \sigma_k \circ \{\mtt{x} \mapsto y:M \} \rangle$ and \textsf{Select} moves the head position to the end of the strings \={s} and \={t}, condition \ref{en:ii} is satisfied. 

For \ref{en:iii}, let $\lambda_k' = (L \mapsto T)$ where $T = \delta(L)$. For $\delta$ to be a unifier, it must contain term(s) $(\mtt{x} \mapsto T)$

\begin{tabular}{lcl}
$\delta$ &$\supseteq$& $\sigma_k \circ \lambda_k \cup (x \mapsto T)$ \\
&=& $\sigma_k \circ \lambda_k \cup ((x \mapsto L) \circ \lambda_k')$ \\
&=& $\sigma_k \circ (x \mapsto L) \circ \lambda_k \circ \lambda_k'$ (since no repeated list variables)\\ 
&=& $\sigma_{k+1} \circ \lambda_{k+1}$ as required (set $\lambda_{k+1} = \lambda_k \circ \lambda_k'$)\\
\end{tabular}

\end{subcase}

\noindent
Now let $LookAhead(\overline{s}_m) = true$ and the problem is of the form $\mtt{x}:L=s:M$ with $L\neq empty$.

\begin{subcase} $Type(\overline{t_n}) = ListVar$
 
By hypothesis, $\mtt{y}$ must be a list variable. Then rule \textsf{Decomp2} is applicable (condition \ref{en:i} satisfied). Because $\mtt{x}$ has the longer expansion by $\delta$, $\overline{s_{m+L(\overline{t}_n)+1}}$ is a child variable of $\mtt{x}$ and condition \ref{en:ii} holds. 

For \ref{en:iii}, let $\lambda_k' = (\overline{s_n} \ldots \overline{s_{m+L(\overline{t}_n)}} \mapsto T)$ where $T = \delta(\overline{s_m} \ldots \overline{s_{m+L(\overline{t}_n)}})$, and the argument becomes identical to Case 1.1 ($\mtt{x}$ starts with $\mtt{y}$, so $\sigma_{k+1} = \sigma_k \circ \{\mtt{x} \mapsto \mtt{y:x'} \}$)

\end{subcase}

\begin{subcase} $Type(\overline{t_n}) = AtomVar$ or $Type(\overline{t_n}) = AtomExpr$

By hypothesis, $\mtt{y}$ is an atom variable or atom expression. Then the problem is of the form $\mtt{x}:L = \mtt{a}:M$ with $L \neq \mtt{empty}$. So rule \textsf{Subst3} is applicable (condition \ref{en:i} satisfied). Positions are incremented by 1, so condition \ref{en:ii} holds as $\overline{s}_{m+1}$ and $\mtt{x'}$ (in \textsf{Orient3}) must have the same parent symbol. 

For \ref{en:iii}, let $\lambda_k' = (\overline{t}_n \mapsto T) $ where $T = \delta(\overline{t}_n)$. It follows that $\delta$ must contain term ($\overline{s}_m \mapsto T$) to be a unifier.

\begin{tabular}{lcl}
$\delta$ &$\supseteq$& $\sigma_k \circ \lambda_k \cup (\overline{s}_m \mapsto T)$ \\
&=& $\sigma_k \circ \lambda_k \cup ((\overline{s}_m \mapsto \overline{t}_n) \circ \lambda_k')$ \\
&=& $\sigma_k \circ \lambda_k \cup ((\mtt{x} \mapsto \mtt{a:x'}) \circ \lambda_k')$ \\
&=& $\sigma_k \circ (\mtt{x} \mapsto \mtt{a:x'}) \circ  \lambda_k \circ \lambda_k'$ (since no repeated list variables) \\
&=& $\sigma_{k+1} \circ \lambda_{k+1}$ as required (set $\lambda_{k+1} = \lambda_k \circ \lambda_k'$)
\end{tabular}
\end{subcase}

\begin{subcase} $Type(\overline{t_n}) = EmptyListVar$

We have that $\overline{t}_n$ is of the form $\mtt{y}_e$ where $\mtt{y}$ is a list variable with $\delta(\mtt{y})=\mtt{empty}$. The unification problem is then of the form $\mtt{x:L=y:M}$. Rule \textsf{Decomp2} becomes applicable

Let $\lambda_k'= (\mtt{y \mapsto empty})$ 

\begin{tabular}{lcl}
$\delta$ &$\supseteq$& $\sigma_k \circ \lambda_k \cup (\mtt{y} \mapsto \mtt{empty})$ \\
&=& $\sigma_k \circ \lambda_k \cup ((\mtt{x} \mapsto \mtt{y:x'}) \circ \lambda_k')$ \\
&=& $\sigma_k \circ (\mtt{x} \mapsto \mtt{y:x'}) \circ \lambda_k \circ \lambda_k'$ (because no repeated list variables) \\
&=& $\sigma_{k+1} \circ \lambda_{k+1}$ as required (set $\lambda_{k+1} = \lambda_k \circ \lambda_k'$) \\
\end{tabular}
\end{subcase}

\end{case}

\begin{case} $L(\overline{s},m) = L(\overline{t},n)$

\begin{subcase} $L(\overline{s},m) = L(\overline{t},n) > 0$

It must be the case that $\overline{s}_m$ and $\overline{t}_n$ are list variables because $L(\overline{s}_m) > 0$ and $L(\overline{t},n) > 0$ only for list variables.

By hypothesis, it follows that $\mtt{x}$ and $\mtt{y}$ are also list variables. 

If there is nothing after the $\mtt{x}$ (i.e. $LookAhead(\overline{s}_m) = false$). Then \textsf{Subst1} is applicable and the argument is the same as Case 2.1.

Otherwise, it must be that $L \neq \mtt{empty}$ so rule \textsf{Decomp1} is applicable. Here the problem is of the form $\mtt{x:L=y:M}$ and the generated substitution is $\sigma_{k+1} = \sigma_k \circ (\mtt{x \mapsto y})$. 

For \ref{en:ii}, the argument is similar to Case 2.2 except that now both $\overline{s}_{m+L(\overline{s_m})+1}$ and $L$, and $\overline{t}_{n+L(\overline{s_m})+1}$ and $M$ must have the same pairs of parent symbols.

Let $\lambda_k' = (\mtt{y \mapsto \delta(y)})$. It must be the case that $\delta$ also contains term $(\mtt{x \mapsto \delta(y)})$ to be a unifier.

\begin{tabular}{lcl}
$\delta$ &$\supseteq$& $\sigma_k \circ \lambda_k \cup (\mtt{x} \mapsto \mtt{\delta(y)})$ \\
&=& $\sigma_k \circ \lambda_k \cup ((\mtt{x} \mapsto \mtt{y}) \circ \lambda_k')$ \\
&=& $\sigma_k \circ (\mtt{x} \mapsto \mtt{y}) \circ \lambda_k \circ  \lambda_k'$ (because no repeated list variables) \\
&=& $\sigma_{k+1} \circ \lambda_{k+1}$ as required (set $\lambda_{k+1} = \lambda_k \circ \lambda_k'$) \\
\end{tabular}

\end{subcase}

\begin{subcase} $L(\overline{s},m) = L(\overline{t},n) = 0$

This is the largest subcase because $\overline{s_m}$ and $\overline{t_n}$ can be of any of the possible types.

\textbf{3.2.1-3 } $(ListVar,ListVar)$ or $(ListVar,AtomVar)$ or $(ListVar,AtomExpr)$-- 

Same as \textbf{3.1} because the parent symbol of $\overline{t}_n$ can be matched by $s$ in \textsf{Decomp1}.

\textbf{3.2.4. } $(ListVar,EmptyListVar)$ -- same as Case \textbf{2.4}

\textbf{3.2.5. } $(ListVar,empty)$ -- cannot happen as $\delta$ is a unifier
\\

\textbf{3.2.6. } $(AtomVar,ListVar)$ -- same as Case \textbf{1.2}

\textbf{3.2.7. } $(AtomVar,AtomVar)$ or $(AtomVar,AtomExpr)$
\begin{itemize}
\item $LookAhead(\overline{s}_m) = true$, i.e. $L \neq \mtt{empty}$ - then the problem is of the form $\mtt{a}:L=\mtt{b}:M$ ($\mtt{a}$ and $\mtt{b}$ are also atom variables by hypothesis) and rule \textsf{Decomp1'} is applicable. \textsf{Unify} generates $\sigma_{k+1} = \sigma_k \circ \{\mtt{a \mapsto b}\}$. 

Let $\lambda_k' = (\overline{s}_m \mapsto \delta(\overline{s}_m)) = (\mtt{a \mapsto \delta(a)})$. Then $\delta$ must also contain term $(\overline{t}_n \mapsto \delta(\overline{s}_m))$ to be a unifier:

\begin{tabular}{lcl}
$\delta$ &$\supseteq$& $\sigma_k \circ \lambda_k \cup (\overline{t}_n \mapsto \delta(\overline{s}_m))$ \\
&=& $\sigma_k \circ \lambda_k \cup ((\overline{t}_n \mapsto \overline{s}_m) \circ \lambda_k')$ \\
&=& $\sigma_k \circ \lambda_k \cup ((\mtt{a \mapsto b}) \circ \lambda_k')$ \\
&=& $\sigma_k \circ (\mtt{a} \mapsto \mtt{b}) \circ \lambda_k \circ  \lambda_k'$ \\
&=& $\sigma_{k+1} \circ \lambda_{k+1}$ as required (set $\lambda_{k+1} = \lambda_k \circ \lambda_k'$) \\
\end{tabular}

\item $LookAhead(\overline{s}_m) = false$, i.e. the problem is $\mtt{a=b:M}$. 

\begin{itemize}
\item If $LookAhead(\overline{t}_n) = true$, then $M$ can only be a single list variable for $\delta$ to be a unifier with $\delta(\mtt{y}) = \mtt{empty}$. This makes rule \textsf{Decomp4} applicable. For \ref{en:iii}, the argument is the same as in the previous item (with $\sigma_{k+1} = \sigma_k \circ \{\mtt{a \mapsto b} \}$).
\item If $LookAhead(\overline{t}_n) = false$, then the problem is of the form $\mtt{a = b}$, which makes \textsf{Subst1} is applicable and the argument is the same as Case 2.1
\end{itemize}
\end{itemize} 

\textbf{3.2.9. } $(AtomVar,EmptyListVar)$ -- same as Case \textbf{1.2}

\textbf{3.2.10. } $(AtomVar,empty)$ -- cannot happen due to failure lemmata  

\textbf{3.2.11. } $(AtomExpr,ListVar)$ -- same as Case \textbf{1.3}
 
\textbf{3.2.12. } $(AtomExpr,AtomVar)$ -- same as Case \textbf{1.3} 

\textbf{3.2.13. } $(AtomExpr,AtomExpr)$ -- 
Both symbols must be the same atom expression due to not considering the subunification algorithm for String-Char. So rule \textsf{Decomp3} is applicable. Conditions \ref{en:ii} and \ref{en:iii} hold trivially.
 
\textbf{3.2.14. } $(AtomExpr,EmptyListVar)$ -- same as \textbf{1.3}
 
\textbf{3.2.16-18. } $(EmptyListVar,ListVar)$ or $(EmptyListVar,AtomVar)$ or $(EmptyListVar,AtomExpr)$ 
It must be the case that there is something following $\overline{s_m}$ for $\delta$ to be a unifier.
Proceed proof like Case \textbf{1.4}

\textbf{3.2.19. } $(EmptyListVar,EmptyListVar)$ -- same as \textbf{3.1}

\textbf{3.2.20. } $(EmptyListVar,empty)$

It follows that $\mtt{x}$ (the parent of $\overline{s}_m$) is a list variable with $\delta(\mtt{x}) = \mtt{empty}$. Also, since there are no repeated list variables, for $\delta$ to be a unifier there must be nothing following $\mtt{x}$. So the problem is of the form $\mtt{x=empty}$ and rule \textsf{Subst1} is applicable.

For condition \ref{en:iii}, proceed as Case \textbf{1.4}
 
\textbf{3.2.21. } $(empty,ListVar)$ -- cannot happen because $\delta$ is a unifier
 
\textbf{3.2.22 \& 3.2.23 } $(empty,AtomVar)$ or $(empty,AtomExpr)$ -- cannot happen due to failure lemmata 

\textbf{3.2.24. } $(empty,EmptyListVar)$

It follows that $\mtt{y}$ (the parent of $\overline{t}_n$) is a list variable with $\delta(\mtt{y}) = \mtt{empty}$. So the problem has the form $\mtt{empty = x}:L$ and rule \textsf{Orient4} is applicable.
 
\textbf{3.2.25. } $(empty,empty)$

It follows that rule \textsf{Remove} is applicable. Conditions \ref{en:ii} and \ref{en:iii} hold trivially.

\end{subcase}

\end{case}
\end{proof}
\section{Conclusion}
\label{sec:conclusion}

This paper presents groundwork for a static confluence analysis of GP programs. We have constructed a rule-based unification algorithm for systems of equations with left-hand expressions of rule schemata, and have shown that the algorithm always terminates and is sound. Moreover, we have proved completeness in that every unifier of the input problem is an instance of some unifier in the computed set of solutions.

Future work includes establishing a Critical Pair Lemma in the sense of \cite{Plump05a}; this entails developing a notion of \emph{independent}\/ rule schema applications, as well as restriction and embedding theorems for derivations with rule schemata. Another topic is to consider critical pairs of conditional rule schemata (see \cite{ehrig2012}).

In addition, since critical pairs contain graphs labelled with expressions, checking joinability of critical pairs will require sufficient conditions under which equivalence of expressions can be decided. This is because the theory of GP's label algebra includes the undecidable theory of Peano arithmetic.

\begin{acknowledge}
We thank the anonymous referees of GCM 2014 for their comments on previous versions of this paper. 
\end{acknowledge}

\bibliographystyle{eceasst}
\bibliography{abbr,%
              gragra,%
              gratra-languages,%
              unif}
\end{document}